\documentclass[twocolumn]{autart}
\usepackage{amsmath,amssymb,mathtools}
\usepackage{graphicx,epstopdf,framed}
\usepackage{euscript,paralist,xcolor}

\theoremstyle{remark}
\newtheorem{theorem}{Theorem}[section]

\newtheorem{proposition}[theorem]{Proposition}
\newtheorem{definition}[theorem]{Definition}

\newtheorem{remark}[theorem]{Remark}

\numberwithin{equation}{section}

\newcommand{\BE}{\begin{equation}}
\newcommand{\rfb}[1]{\mbox{\rm
   (\ref{#1})}\ifx\undefined\stillediting\else:\fbox{$#1$}\fi}

\newfont{\roma}{cmr10 scaled 1200}
\renewcommand{\cline}{{\mathbb C}}

\newcommand{\nline}  {{\mathbb N}}
\newcommand{\rline}  {{\mathbb R}}

\newcommand{\tline}  {{\mathbb T}}

\newcommand{\dd}   {{\rm d}\hbox{\hskip 0.5pt}}
\newcommand{\Ascr} {{\mathcal A}}
\newcommand{\Bscr} {{\mathcal B}}

\newcommand{\Dscr} {{\mathcal D}}

\newcommand{\Lscr} {{\mathcal L}}

\newcommand{\Wscr} {{\mathcal W}}

\newcommand{\mm}    {{\hbox{\hskip 0.5pt}}}
\newcommand{\m}     {{\hbox{\hskip 1pt}}}

\newcommand{\bluff} {{\hbox{\raise 15pt \hbox{\mm}}}}
\newcommand{\sbluff}{{\hbox{\raise  7pt \hbox{\mm}}}}

\newcommand{\FORALL} {{\hbox{$\hskip 11mm \forall \;$}}}

\renewcommand{\theequation}{{\arabic{section}.\arabic{equation}}}

\begin{document}
\begin{frontmatter}

\title{Motion planning and approximate controllability of a moving cantilever beam with a tip-mass\thanksref{footnoteinfo}}

\thanks[footnoteinfo]{This work was supported by the Anusandhan National Research Foundation, India, via the Core Research Grant CRG/2023/004880 and A. Batra is supported by the Prime Minister's Research Fellowship via the grant RSPMRF0262. This paper was not presented at any IFAC meeting. } 

\author[IITB]{Soham Chatterjee}\ead{soham.chatterjee@iitb.ac.in},
\author[IITB]{Aman Batra}\ead{aman.batra@iitb.ac.in},
\author[IITB]{Vivek Natarajan}\ead{vivek.natarajan@iitb.ac.in}

\address[IITB]{Centre for Systems and Control, Indian Institute of Technology Bombay, Mumbai 400076, India}

\begin{keyword}
Approximate controllability, coupled PDE-ODE model, flatness, motion planning, non-uniform beam.
\end{keyword}

\begin{abstract}
Consider a non-uniform Euler-Bernoulli beam with a tip-mass at one end and a cantilever joint at the other end. The cantilever joint is not fixed and can itself be moved along an axis perpendicular to the beam. The position of the cantilever joint is the control input to the beam. The dynamics of the beam is governed by a coupled PDE-ODE model with boundary input. On a natural state-space, there exists a unique state trajectory for this beam model for every initial state and each twice continuously differentiable control input which is compatible with the initial state. In this paper, we study the motion planning problem of transferring the beam model from an initial state to a final state over a prescribed time-interval and then employ the results obtained to establish the approximate controllability of this model. We address these problems by
extending and applying the generating functions approach to flatness-based control
to the beam model. We prove that the transfer described above is feasible if the initial and final states belong to a certain set, which also contains the steady-states of the beam model. We then establish that this set contains all the eigenfunctions of the beam model, which form a Riesz basis for the state-space, and thereby conclude the approximate controllability of the beam model over all time intervals. We illustrate our theoretical results on motion planning using simulations and experiments. \vspace{-2mm}
\end{abstract}
\end{frontmatter}

\section{Introduction} \label{intro}

Consider the following model of a non-uniform moving cantilever Euler-Bernoulli beam with a tip-mass at one end and a cantilever joint at the other end: \vspace{-2mm}
\begin{align}
 & \rho(x) w_{tt}(x,t)+(EI w_{xx})_{xx}(x,t) = 0, \label{eq:beam1}\\[0.5ex]
 & \hspace{2mm} m w_{tt}(0,t)+(EI w_{xx})_x(0,t)=0, \label{eq:beam2}\\[0.5ex]
 & \hspace{2mm} J w_{xtt}(0,t) - EI(0)w_{xx}(0,t) = 0, \label{eq:beam3} \\[0.5ex]
 & \hspace{2mm} w(L,t) = f(t), \qquad w_x(L,t)=0. \label{eq:beam4}\\[-4.5ex] \nonumber
\end{align}
Here $L>0$ is the length of the beam, $w(x,t)$ is the displacement of the beam at the location $x\in [0,L]$ and time $t\in(0,\infty)$, $\rho(x)$ and $EI(x)$ are the mass per unit length and flexural rigidity, respectively, of the beam at $x\in[0,L]$, and $m>0$ and $J>0$ are the mass and moment of inertia, respectively, of the tip-mass located at the $x=0$ end of the beam. We suppose that $\rho \in C^4[0,L]$, $EI \in C^4[0,L]$ and they are strictly positive, i.e $\inf_{x\in[0,L]}\rho(x)>0$ and $\inf_{x\in[0,L]}EI(x)>0$. The displacement of the cantilever joint at the $x=L$ end of the beam is determined by the scalar control input $f$. The coupled PDE-ODE model \eqref{eq:beam1}-\eqref{eq:beam4} governs the dynamics of engineering systems which have a moving cantilever beam such as single-axis flexible cartesian robots. Figure 1 shows an experimental setup consisting of a non-uniform moving cantilever beam with a tip-mass. We remark that our model \eqref{eq:beam1}-\eqref{eq:beam4} resembles the SCOLE model that has been widely studied in the literature. Indeed, by fixing the cantilever joint at $x=L$ (i.e. taking $f=0$ in \eqref{eq:beam4}) and adding a force input $v_1$ and a torque input $v_2$ on the tip-mass at $x=0$ (i.e. letting the right-sides of  \eqref{eq:beam2} and \eqref{eq:beam3} be $v_1$ and $v_2$, respectively) we get the SCOLE model. \vspace{-1mm}

There exists a natural state-space $Z$ for the beam model \eqref{eq:beam1}-\eqref{eq:beam4} in which it has a unique state trajectory for every initial state $z_0$  and each twice continuously differentiable control input $f$ which is compatible with $z_0$, see Section \ref{sec2}. In this paper, we study the motion planning problem of  transferring \eqref{eq:beam1}-\eqref{eq:beam4} from an initial state $z_0$ to a final state $z_T$ over a prescribed time-interval $[0,T]$ using an appropriate control input $f$. We prove that the desired transfer is possible if the initial and final states belong to a certain set, which contains the steady-states of the beam model and is independent of $T$. We then establish that this set is dense in the state-space which implies that the beam model \eqref{eq:beam1}-\eqref{eq:beam4} is approximately controllable over all time intervals. \vspace{-5mm}

$$\includegraphics[width=0.45\textwidth]{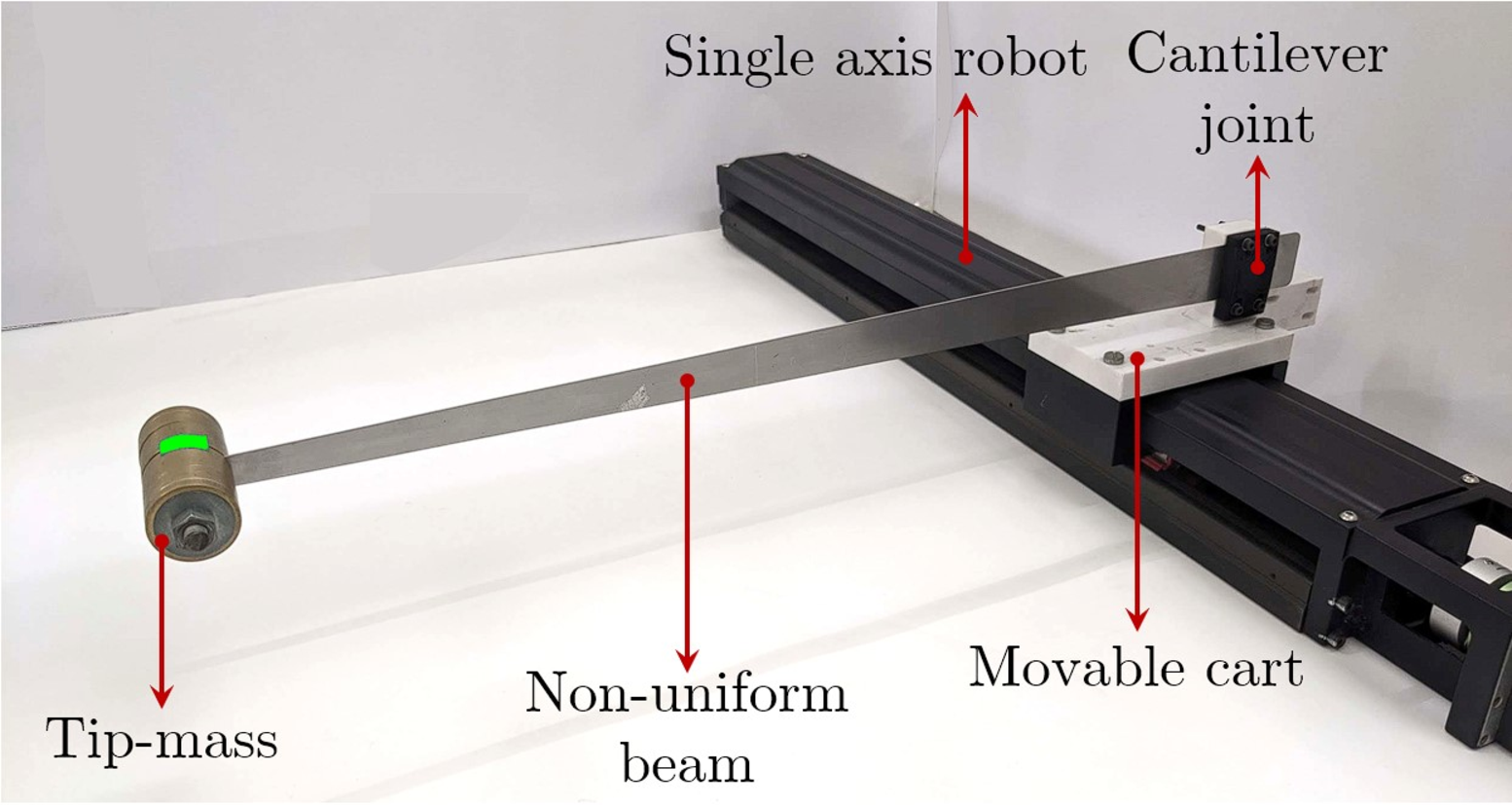} $$
\centerline{ \parbox{3.2in}{\small
Figure 1. Experimental setup of a non-uniform moving cantilever Euler-Bernoulli beam with a tip-mass, see Section \ref{sec6} for details. \vspace{2mm}
}}


The controllability of the SCOLE model, which resembles our beam  model \eqref{eq:beam1}-\eqref{eq:beam4} as discussed above, has been investigated comprehensively in the literature. It is well-known that the SCOLE model is not exactly controllable on the natural energy state-space over any time interval $[0,T]$ using $L^2$ control inputs, see for instance \cite{Guo:2002}. However, exact controllability can be recovered if one admits singular control inputs or considers smaller state-spaces, 
see \cite{Guo:2002}, \cite{GuIv:2005}, \cite{LiMa:1988}, \cite{Rao:2001}, \cite{ZhWe:2010}. In \cite{Rao:2001}, it has been shown via the Hilbert Uniqueness Method that the SCOLE model is exactly controllable on the energy state-space provided the torque input is allowed to belong to the dual of $H^1(0,T)$. For certain smaller and more regular state-spaces, it has been established in \cite{LiMa:1988} using the theory of semi-infinite beams, in \cite{Guo:2002} using spectral analysis and the Russell's exact controllability via stabilzability principle, in \cite{GuIv:2005} via the moment method and in \cite{ZhWe:2010} using an operator-theoretic framework for coupled linear systems that the SCOLE model is exactly controllable using
force and/or torque inputs in $L^2(0,T)$. These smaller state-spaces are dense subspaces of the energy state-space, which implies the approximate controllability of the SCOLE model on the energy state-space using $L^2$ inputs, see also \cite[Proposition 4.1]{Guo:2002}. In contrast to the SCOLE model, to the best of our knowledge, the controllability of the moving cantilever beam model \eqref{eq:beam1}-\eqref{eq:beam4} has not been studied in the literature. In this work, by employing the flatness technique, we prove that the beam model \eqref{eq:beam1}-\eqref{eq:beam4} is approximately controllable on its natural state-space $Z$ using $L^2$ displacement inputs. 

In the flatness technique for motion planning of infinite-dimensional control systems, the state and the input of the system are expressed as an infinite linear combination of a flat output and its time derivatives. Then, based on the desired motion, an appropriate trajectory is selected for the flat output using which the input necessary to execute the motion is computed. The flatness technique has been used in the literature to address motion planning problems for Euler-Bernoulli beams. Using the transform approach (Laplace transform or Mikusi\'nski's operational calculus) to flatness, motion planning problems have been solved for Euler-Bernoulli beams with fixed cantilever joints in \cite{MeThKu:2008}, rotating cantilever joints in \cite{AoFlMoRoRu:1997}, \cite{BaLy:2008} and \cite{LyWa:2004} and translating cantilever joints in \cite{BaUlRu:2011}. In \cite{RuWo:2008}, it has been shown that the transform approach to flatness can be used to address motion planning problems for a class of linear PDEs, which includes beam models. The power series approach and Riesz spectral approach to flatness have been used in \cite{KoGeSc:2022} and \cite{MeScKu:2010}, respectively, to solve motion planning problems for Euler-Bernoulli beams with fixed cantilever joints. Recently, we used the semi-discretization approach to flatness in \cite{ChNa:2020} and \cite{ChBaNa:2025} to solve motion planning problems for Euler-Bernoulli beams with hinged and sliding cantilever joints, respectively. While the above works address 
motion planning problems for Euler-Bernoulli beams using the flatness technique, they do not examine the controllability properties of these beams. \vspace{-1mm}

Another approach to flatness based motion planning of 1D PDEs is the generating functions approach. In this approach the input and solution of the PDE are expressed in terms of certain generating functions which are obtained by solving a sequence of initial value ODEs recursively, see \cite{LaMa:2000}. This approach has been used to establish the null controllability of 1D parabolic PDEs in \cite{MaRoRo:2014}, \cite{MaRoRo:2016}, 1D Schr\"odinger equations in \cite{MaRoRo:2018} and the linearized Korteweg-de Vries equations \cite{MaRiRoRo:2019}. In \cite{MaRoRo:2014}, the null controllability of nD heat equations on cylindrical domains has also been established. In the present work, we extend this generating functions approach to solve the problem of transferring the beam model \eqref{eq:beam1}-\eqref{eq:beam4} from an initial state $z_0$ to a final state $z_T$ over a prescribed time-interval $[0,T]$. We prove that this transfer is possible provided $z_0$ and $z_T$  belong to a certain time-independent set, which contains the steady-states of the beam model. 
We establish that the eigenfunctions of \eqref{eq:beam1}-\eqref{eq:beam4} can be expressed in terms of the generating functions and conclude that the above set also contains the eigenfunctions. Since the eigenfunctions form a Riesz basis for the state-space, the approximate controllability of the beam model over any time interval follows. \vspace{-2mm}

The rest of the paper is organized as follows: In Section \ref{sec2} we  establish the well-posedness of the Euler-Bernoulli beam model \eqref{eq:beam1}-\eqref{eq:beam4}. We define the generating functions for the beam model in Section \ref{sec3} and derive some estimates for them. Section \ref{sec4} contains our solution to the motion planning problem and Section \ref{sec5} contains our proof of approximate controllability. We present our numerical and experimental results which illustrate our solution to the motion planning problem in Section \ref{sec6}.

{\bf Notations}. \\[1ex]
Let $H^k(0,L)$ denote the usual $k^{\rm th}$-order Sobolev space of complex-valued functions on the interval $(0,L)$. We denote the space of continuous and $k$-times continuously differentiable functions from an interval $[a,b]$ to a Hilbert space $X$ by $C([a,b];X)$ and $C^k([a,b];X)$, respectively, and they are both Banach spaces with the usual norm. Let $C^\infty([a,b];X)$ denote the set of all functions which belong to $C^k([a,b];X)$ for every $k\geq 0$. We write $C^k[a,b]$  instead of $C^k([a,b];\cline)$ and $C^\infty[a,b]$ instead of $C^\infty([a,b];\cline)$. For a function  $y\in C^\infty([a,b];X)$, let $y^{(m)}$ denote its $m^\mathrm{th}$-derivative. A function $y\in C^\infty[0,T]$ is said to be a Gevrey function of order $s>0$ if it satisfies the estimate $\sup_{t\in[0,T]}|y^{(m)}(t)|\leq D^{m+1}(m!)^{s}$ for all $m\in \nline$ and some constant $D>0$. The set of all functions which satisfy these estimates is denoted by $G_s[0,T]$. We will use a \emph{bar} to denote complex conjugate.

\section{Well-posedness} \label{sec2}

A natural state space for the coupled PDE-ODE beam model \eqref{eq:beam1}-\eqref{eq:beam4} is the complex Hilbert space \vspace{-1mm}
$$ Z = \{[u \ \  v \ \ \alpha \ \ \beta] \in H^2(0,L)\times L^2(0,L)\times \cline\times \cline \big| u_x(L)=0 \} \vspace{-1mm}$$
in which the inner product of any two vectors $z_1=[u_1 \ v_1 \ \alpha_1 \ \beta_1]$ and $z_2=[u_2 \ v_2 \ \alpha_2 \ \beta_2]$ is given by \vspace{-2mm}
\begin{align*}
 \langle z_1, z_2 \rangle_Z &= \int_0^L EI(x) u_{1, xx}(x) \overline{u}_{2, xx}(x) \dd x \\
 &\quad + \int_0^L u_1(x) \overline u_2(x) \dd x  + \int_0^L \rho(x) v_1(x) \overline v_2(x) \dd x \\
 &\quad + m\alpha_1 \overline \alpha_2 + J \beta_1 \overline \beta_2. \\[-4.5ex]
\end{align*}
Note that each physically meaningful configuration of the Euler-Bernoulli beam considered in this work, see Figure 1, with a smooth beam shape and a translated cantilever position, can be represented by a vector in the state-space $Z$. In this section we will establish the existence and uniqueness of weak solutions to the beam model \eqref{eq:beam1}-\eqref{eq:beam4} on the state-space $Z$ under a compatibility assumption on the control input $f$. We will show that these solutions depend continuously on the initial state and the control input, provided the compatibility assumption is not violated.

\begin{definition}
A function $f\in C[0,T]$ is said to be compatible for a vector $[u \ v \ \alpha \ \beta]\in Z$ if $f(0)=u(L)$.
\end{definition}

We now introduce a natural notion of weak solutions for the beam model \eqref{eq:beam1}-\eqref{eq:beam4} when its control input is compatible for its initial state.

\begin{definition}\label{def:weaksol}
Let $T>0$, $z_0 = [u_0 \ v_0 \ \alpha_0 \ \beta_0]\in Z$ and $f\in C[0,T]$ satisfying $f(0)=u_0(L)$ be given. A function $z\in C([0,T];Z)$ given as \vspace{-1mm}
\begin{equation}\label{eq:wdetz}
 z(t)=[w(\cdot,t) \ \ w_t(\cdot,t) \ \ w_t(0,t) \ \ w_{xt}(0,t)],
\end{equation}
where $w\in C([0,T];H^2(0,L))\cap C^1([0,T];L^2(0,L))$ with \hfill
$w(0,\cdot),w_x(0,\cdot)\in C^1[0,T]$, is a weak solution of \eqref{eq:beam1}-\eqref{eq:beam4} on the time interval $[0,T]$ for the initial state $z_0$ and control input $f$ if $z(0)=z_0$ and $w(L,t)=f(t)$ for all $t\in[0,T]$ and the following equation holds for all $\varphi\in H^2(0,L)$ satisfying $\varphi(L)=\varphi_x(L)=0$ and all $t\in[0,T]$\vspace{-1mm}:
\begin{align}
 &\int_0^L \rho(x)\bigl[w_t(x,t) - v_0(x)\bigr]\varphi(x)\dd x  \nonumber\\
 & \quad + m \bigl[w_t(0,t)-\alpha_0\bigr]\varphi(0) + J\bigl[w_{xt}(0,t)-\beta_0\bigr]\varphi_x(0) \nonumber\\
 &\qquad = - \int_0^t \int_0^L EI(x) w_{xx}(x,\tau) \varphi_{xx}(x)\dd x \dd \tau. \label{eq:weaksolution} \\[-6.5ex]\nonumber
\end{align}
\end{definition}

Note that the weak form of \eqref{eq:beam1}-\eqref{eq:beam4} shown in \eqref{eq:weaksolution} is obtained by multiplying \eqref{eq:beam1} with $\varphi$ and then integrating it by parts formally by taking into account the boundary conditions \eqref{eq:beam2}-\eqref{eq:beam4} and the initial condition $w_t(\cdot,0)=v_0$. We remark that our notion of weak solution includes a compatibility condition which couples the input to the initial state. This is to be expected since any solution $z(t)=[w(\cdot,t) \ w_t(\cdot,t) \ w_t(0,t) \ w_{xt}(0,t)]$ for the beam model \eqref{eq:beam1}-\eqref{eq:beam4} in $Z$, defined using any notion of solution, uniquely determines $w(L,t)$ (which is the control input).

In the next proposition, we show that \eqref{eq:beam1}-\eqref{eq:beam4} is well-posed in $Z$ subject to the compatibility assumption, i.e. we show that when the control input $f\in C^2[0,T]$ is compatible for the initial state, there exists a unique weak solution for \eqref{eq:beam1}-\eqref{eq:beam4} which depends continuously on $f$ and the initial state.

\begin{proposition}\label{pr:wellposed}
Let $T>0$, an input $f\in C^2[0,T]$ and an initial state $z_0 = [u_0 \ v_0 \ \alpha_0 \ \beta_0]\in Z$ be given. Suppose that $f$ is compatible for $z_0$, i.e. $f(0)=u_0(L)$. Then there exists a unique weak solution $z\in C([0,T];Z)$ of \eqref{eq:beam1}-\eqref{eq:beam4} on the time interval $[0,T]$ for the initial state $z_0$ and control input $f$. Furthermore, there exists a $M_T>0$ independent of $z_0$ and $f$ such that \vspace{-1mm}
\begin{equation}\label{eq:est_wellposed}
  \sup_{t\in [0,T]}\|z(t)\|_{Z} \leq M_T \Big(\|z_0\|_Z + \|f\|_{C^2[0,T]}\Big). \vspace{-4mm}
\end{equation}
\end{proposition}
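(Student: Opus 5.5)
Our plan is to lift the boundary input. Write $w(x,t)=\tilde w(x,t)+f(t)$, i.e.\ shift the whole beam rigidly by the cantilever displacement. Since $f$ does not depend on $x$, the new variable $\tilde w$ satisfies $\tilde w_x(L,t)=0$ and $\tilde w(L,t)=0$, and the conditions at $x=0$ only pick up source terms. Substituting formally into \eqref{eq:beam1}-\eqref{eq:beam4} gives
\begin{align*}
 & \rho\tilde w_{tt}+(EI\tilde w_{xx})_{xx}=-\rho \ddot f, \\
 & m\tilde w_{tt}(0,t)+(EI\tilde w_{xx})_x(0,t)=-m\ddot f, \\
 & J\tilde w_{xtt}(0,t)-EI(0)\tilde w_{xx}(0,t)=0, \\
 & \tilde w(L,t)=\tilde w_x(L,t)=0.
\end{align*}
This is the homogeneous clamped beam with tip-mass, driven by an $L^2$-in-time, bounded forcing $-\ddot f(t)\,[0 \ \ 1 \ \ 1 \ \ 0]$. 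The compatibility $f(0)=u_0(L)$ is exactly what places the shifted initial datum $\tilde z_0=z_0-[f(0) \ \ \dot f(0) \ \ \dot f(0) \ \ 0]$ in the closed subspace $Z_0=\{z\in Z \,|\, u(L)=0\}$.

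\textbf{Step 1: the generator on $Z_0$.} Equip $Z_0$ with the energy inner product $\int_0^L EI\,u_{1,xx}\overline u_{2,xx}+\int_0^L\rho v_1\overline v_2+m\alpha_1\overline\alpha_2+J\beta_1\overline\beta_2$. Since $u(L)=u_x(L)=0$, Poincar\'e's inequality makes this norm equivalent to the restriction of the $Z$-norm.

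Define $A[u \ v \ \alpha \ \beta]=[v \ \ -\rho^{-1}(EIu_{xx})_{xx} \ \ -m^{-1}(EIu_{xx})_x(0) \ \ J^{-1}EI(0)u_{xx}(0)]$. Its domain consists of $u\in H^4$ and $v\in H^2$ with $v(L)=v_x(L)=0$, $\alpha=v(0)$ and $\beta=v_x(0)$.

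Integrating by parts twice shows that $A$ is skew-symmetric. To show that $A$ is skew-adjoint we need $\mathrm{Ran}(A)=Z_0$, which amounts to solving a fourth-order two-point boundary value problem:
\begin{itemize}
\item $(EIu_{xx})_{xx}=g$ on $(0,L)$;
\item prescribed values of $(EIu_{xx})_x(0)$ and $EI(0)u_{xx}(0)$;
\item $u(L)=u_x(L)=0$.
\end{itemize}
This is solved explicitly by integrating four times. Alternatively, the Lax--Milgram lemma on $\{\varphi\in H^2 \,|\, \varphi(L)=\varphi_x(L)=0\}$ gives the same conclusion.

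Stone's theorem then says $A$ generates a unitary group on $Z_0$.

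\textbf{Step 2: the mild solution.} Set $\tilde z(t)=\mathbb T_t\tilde z_0+\int_0^t\mathbb T_{t-s}B\ddot f(s)\dd s$, where $B=-[0 \ \ 1 \ \ 1 \ \ 0]\in Z_0$. Let $z(t)=\tilde z(t)+[f(t) \ \ \dot f(t) \ \ \dot f(t) \ \ 0]$ and let $w$ be its first component.

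The required regularity follows directly:
\begin{itemize}
\item $\tilde z\in C([0,T];Z_0)$, so $w\in C([0,T];H^2)$;
\item $w(L,t)=f(t)$.
\end{itemize}

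The estimate \eqref{eq:est_wellposed} also follows directly. Unitarity gives $\|\tilde z(t)\|\le \|\tilde z_0\|+T\|B\|\,\|f\|_{C^2}$. We also have $\|\tilde z_0\|\le C(\|z_0\|_Z+\|f\|_{C^2})$. Adding the shift, which is bounded by $C\|f\|_{C^1}$, gives \eqref{eq:est_wellposed}.

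\textbf{Step 3: identification with Definition \ref{def:weaksol}.} This step needs the most care. We must show three things:
\begin{itemize}
\item $w_t$ exists in $C^1([0,T];L^2)$;
\item $w(0,\cdot),w_x(0,\cdot)\in C^1$, with the third and fourth components equal to $w_t(0,t)$ and $w_{xt}(0,t)$;
\item \eqref{eq:weaksolution} holds.
\end{itemize}

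For smooth data ($\tilde z_0\in D(A)$ and $f\in C^3$) the mild solution is classical. These properties then follow by pairing with $\varphi$ and integrating by parts; the source terms cancel the $\ddot f$ contributions, since $\int\rho\varphi+m\varphi(0)$ multiplies $\ddot f$ on both sides. A density argument together with the continuity estimate extends this to general data. For instance, $\int_0^t\tilde z\dd s\in D(A)$ gives the integrated identity, and the relations between components pass to the limit in $C([0,T];Z)$.

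\textbf{Step 4: uniqueness.} The difference of two weak solutions is a weak solution with zero data and zero input. Its components lie in $Z_0$ and it satisfies the weak form, which says $\langle z(t),\psi\rangle=\langle \int_0^t z,\,-A\psi\rangle$ for $\psi\in D(A)$. Hence $\int_0^t z\in D(A^*)=D(A)$, so $z$ is a mild solution of the homogeneous problem, and therefore $z\equiv0$.

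The main obstacle I expect is in Step 3: identifying $\int_0^t\tilde z$ through the weak form, matching the boundary components, and handling the test functions $\varphi$, which lack the $v$-component structure of $D(A)$.
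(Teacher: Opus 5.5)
Your proposal is correct, but it takes a different route from the paper. The paper lifts the input with $w=\tilde w+\nu f$, where $\nu\in C^\infty[0,L]$ has $\nu(0)=\nu_x(0)=\nu_{xx}(0)=\nu_{xxx}(0)=\nu_x(L)=0$ and $\nu(L)=1$. This keeps the tip-mass conditions at $x=0$ homogeneous, so the input enters only the distributed equation. The price is that it enters through both $f$ and $\ddot f$ (the term $(EI\nu_{xx})_{xx}f+\rho\nu\ddot f$), and the weak form carries an extra $\nu_{xx}f$ term. Your rigid-translation lift $\nu\equiv 1$ uses the fact that constants are annihilated by $\partial_{xx}$ and do not disturb $w_x(L)=0$. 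As a result $w_{xx}=\tilde w_{xx}$ and the forcing is $\ddot f\,[0\ \ {-1}\ \ {-1}\ \ 0]$. This is still a bounded input into $\tilde Z$ (your $Z_0$), because $\alpha$ is an independent coordinate there. The cancellation of the $\ddot f$ terms against $\int_0^L\rho\varphi+m\varphi(0)$ in \eqref{eq:weaksolution} is then immediate. The paper's lift is the generic one, which does not depend on the boundary structure at $x=0$; yours is simpler but exploits the fact that the input is a rigid displacement.

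You also prove generation yourself: skew-symmetry plus surjectivity of $\Ascr$ give skew-adjointness, and Stone's theorem does the rest. The paper instead cites \cite{ZhWe:2011} for generation and \cite{Guo:2002} for bounded invertibility. For identifying the mild solution with a weak solution, you use $\int_0^t\tilde z\in\Dscr(\Ascr)$. Its first component gives $\tilde w(t)=\tilde u_0+\int_0^t\tilde v$, and the domain constraints $\alpha=v(0)$, $\beta=v_x(0)$ give the boundary-component relations directly. The paper instead uses the weak characterization of mild solutions in \cite[Remark 4.2.6]{TuWe:2009} together with the auxiliary vector $\tilde p$ solving $\Ascr\tilde p=[0\ \ \eta_1(s)\ \ \eta_2(s)\ \ \eta_3(s)]$; your version is the more direct of the two.

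Two small corrections. In Step 3 you mean $w\in C^1([0,T];L^2(0,L))$, not $w_t\in C^1$. The obstacle you anticipate about test functions does not actually arise: for every admissible $\varphi$, the vector $[0\ \ \varphi\ \ \varphi(0)\ \ \varphi_x(0)]$ lies in $\Dscr(\Ascr)$. This is exactly how the paper passes between the abstract weak form and \eqref{eq:weaksolution}. It also handles your uniqueness step, where you should still write out the routine integration by parts that checks the first-component part of $\langle z(t),\psi\rangle=\langle\int_0^t z,-\Ascr\psi\rangle$.
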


\begin{pf}
Fix $\nu\in C^\infty[0,L]$ such that $\nu(0)=\nu_x(0)=\nu_{xx}(0)= \nu_{xxx}(0)=\nu_x(L)=0$ and $\nu(L)=1$. Replacing $w(x,t)$ with $\tilde w(x,t)+\nu(x)f(t)$ formally in \eqref{eq:beam1}-\eqref{eq:beam4} we obtain the following coupled PDE-ODE system: For $x\in (0,L)$ and $t>0$,
\begin{align}
  &\rho(x)\tilde w_{tt}(x,t)+ (EI(x) \tilde w_{xx})_{xx}(x,t)\nonumber\\[0.5ex]
  &\hspace{2mm} + (EI(x) \nu_{xx})_{xx}(x)f(t)+\rho(x)\nu(x)\ddot{f}(t)=0, \label{eq:wbeam1}\\[0.5ex]
  &m \tilde w_{tt}(0,t)+(EI \tilde w_{xx})_x(0,t)=0, \label{eq:wbeam2} \\[0.5ex]
  &J\tilde w_{xtt}(0,t) - EI(0)\tilde w_{xx}(0,t) = 0, \label{eq:wbeam3} \\[0.5ex]
  &\hspace{5mm}\tilde w(L,t) =0, \quad  \tilde w_x(L,t)=0. \label{eq:wbeam4}
\end{align}
Consider the Hilbert space $\tilde Z = \{[u \ v \ \alpha \ \beta] \in Z \m\big|\m u(L)=0\}$ equipped with the following inner product: For $z_1=[u_1 \ v_1 \ \alpha_1 \ \beta_1] \in \tilde Z$ and $z_2=[u_2 \ v_2 \ \alpha_2 \ \beta_2]\in \tilde Z$, \vspace{-2mm}
\begin{align}
 &\langle z_1, z_2  \rangle_{\tilde Z} = \int_0^L EI(x)u_{1,xx}(x) \overline u_{2,xx}(x) \dd x \nonumber\\
 &\hspace{5mm} + \int_0^L \rho(x)v_1(x) \overline v_2(x)\dd x + m\alpha_1 \overline \alpha_2 + J\beta_1 \overline\beta_2. \label{eq:tZinner} \\[-7ex]\nonumber
\end{align}

Recall $z_0$ and $f$ from the statement of this proposition. Define $\tilde z_0 = [\tilde u_0 \ \tilde v_0 \ \tilde \alpha_0 \ \tilde \beta_0]$, where $\tilde u_0 = u_0 - \nu f(0)$, $\tilde v_0 = v_0 - \nu \dot f(0)$, $\tilde \alpha_0 = \alpha_0$ and $\tilde \beta_0 = \beta_0$. Then $\tilde z_0\in \tilde Z$.
We claim that there exists a unique function $\tilde z\in C([0,T];\tilde Z)$, given as \vspace{-1mm}
\begin{equation}\label{eq:tdwdetstdz}
 \tilde z(t)=[\tilde w(\cdot,t) \ \ \tilde w_t(\cdot,t) \ \ \tilde w_t(0,t) \ \ \tilde w_{xt}(0,t)] \vspace{-1mm}
\end{equation}
with $\tilde w \in C([0,T];H^2(0,L))\cap C^1([0,T];L^2(0,L))$ and $\tilde w(0,\cdot), \tilde w_x(0,\cdot)\in C^1[0,T]$, is such that $\tilde z(0)=\tilde z_0$ and $\tilde w(L,t)=\tilde w_x(L,t)=0$ for all $t\in [0,T]$ and the following equation holds for all $\varphi\in H^2(0,L)$ satisfying $\varphi(L)=\varphi_x(L)=0$ and all $t\in [0,T]$: \vspace{-2mm}
\begin{align}
 &\int_0^L \rho(x)\bigl[\tilde w_t(x,t) -  \tilde v_0(x) - \nu(x)( \dot f(t)-\dot f(0))\bigr]\varphi(x)\dd x  \nonumber\\
 &\hspace{5mm} + m \bigl[\tilde w_t(0,t)- \tilde \alpha_0\bigr]\varphi(0) + J\bigl[\tilde w_{xt}(0,t)- \tilde \beta_0\bigr]\varphi_x(0) \nonumber\\
 &= -\int_0^t\!\!\int_{0}^{L}\!\! EI(x) \bigl[\tilde w_{xx}(x,\tau) - \nu_{xx}(x) f(\tau) \bigr]\varphi_{xx}(x)\dd x \dd \tau. \label{eq:weaksol_wbeam}\\[-4.5ex]\nonumber
\end{align}
Furthermore, $\tilde z$ satisfies \vspace{-1mm}
\begin{equation}\label{eq:ztd_wellposed_est}
 \sup_{t\in [0,T]} \|\tilde z(t) \|_{\tilde Z} \leq \tilde M_T\Big( \|\tilde z_0\|_{\tilde Z} + \|f\|_{C^2[0,T]}\Big). \vspace{-2mm}
\end{equation}
Equation \eqref{eq:weaksol_wbeam} is the weak form of the coupled PDE-ODE system \eqref{eq:wbeam1}-\eqref{eq:wbeam4}. We have presented a proof of the above claim and the estimate \eqref{eq:ztd_wellposed_est} in \vspace{-1mm} Appendix A.

Let $\tilde w$ and $\tilde z$ be the unique functions in the claim above. Define $w=\tilde w+\nu f$ and then define $z$ via \eqref{eq:wdetz} so \vspace{-1mm} that
\begin{equation}\label{eq:xprs}
 z(t) = \tilde z(t)+ [\nu f(t) \ \ \nu \dot f(t) \ \ 0 \ \ 0] \qquad \forall\, t\in [0,T]. \vspace{-1mm}
\end{equation}
Since $\tilde w$ and $\tilde z$ satisfy all the properties mentioned below \eqref{eq:tdwdetstdz}, it is easy to check that $w$ and $z$ satisfy all the properties mentioned in Definition \ref{def:weaksol}, i.e. $z\in C([0,T];Z)$ and it is a weak solution of \eqref{eq:beam1}-\eqref{eq:beam4} on the time interval $[0,T]$ for the initial state $z_0$ and control input $f$. The uniqueness of $z$ follows from the uniqueness of $\tilde z$. \vspace{-1mm}

Finally, the estimate in \eqref{eq:est_wellposed} can be established as follows: Note that $\tilde Z \subset Z$, $\|z^0\|_{\tilde Z} \leq \|z^0\|_{Z}$ and  $\|z^0\|_{Z} \leq c \|z^0\|_{\tilde Z}$ for all $z^0\in \tilde Z $ and some $c>0$ and $\|[\nu \ \nu \ 0 \ 0]\|_{Z}\leq c_\nu$ for some $c_\nu>0$. Applying the triangle inequality to \eqref{eq:xprs}, which gives us $\|z(t)\|_{Z} \leq c\|\tilde z(t)\|_{\tilde Z} + c_{\nu}\|f\|_{C^2[0,T]}$, and then using \eqref{eq:ztd_wellposed_est} we get
$$\|z(t)\|_{Z} \leq  c\tilde M_T\|\tilde z_0\|_{Z} + (c \tilde M_T + c_{\nu}) \|f\|_{C^2[0,T]}.$$
Since $\tilde z_0 = z_0 - [\nu f(0) \ \nu \dot f(0) \ 0 \ 0]$ (see the definition of $\tilde z_0$ above \eqref{eq:tdwdetstdz}) it follows that $\|\tilde z_0\|_Z \leq \| z_0\|_Z + c_\nu \|f\|_{C^2[0,T]}$. Using this in the above estimate for $\|z(t)\|_Z$ yields $\|z(t)\|_{Z} \leq c \tilde M_T\| z_0\|_{Z} + (c\tilde M_T + c c_\nu \tilde M_T  + c_{\nu}) \|f\|_{C^2[0,T]},$
i.e. \eqref{eq:est_wellposed} holds with $M_T=c \tilde M_T + c c_\nu \tilde M_T + c_{\nu}$. \hfill$\blacksquare$
\end{pf}

\begin{remark} \label{rm:classical}
Suppose that $w\in C([0,T];H^4(0,L))\cap C^2([0,T];L^2(0,L))$ with $w(0,t)\in C^2[0,T]$ and $w_x(0,t)\in C^2[0,T]$ satisfies \eqref{eq:beam1}-\eqref{eq:beam4} for some $f\in C^2[0,T]$ and each $t\in [0,T]$. Let $v_0=w_t(\cdot,0)$, $\alpha_0=w_t(0,0)$ and $\beta_0=w_{xt}(0,0)$. Multiplying \eqref{eq:beam1} with $\varphi\in H^2(0,L)$ and then using integration by parts it is easy to see that \eqref{eq:weaksolution} holds for each $t\in[0,T]$ and each $\varphi\in H^2(0,L)$ satisfying $\varphi(L)=\varphi_x(L)=0$. Hence $z\in C([0,T];Z)$ determined by $w$ via \eqref{eq:wdetz} is a weak solution of the beam model \eqref{eq:beam1}-\eqref{eq:beam4} for the initial state $z_0=[w(\cdot,0) \ w_t(\cdot,0) \ w_t(0,0) \ w_{xt}(0,0)]$ and the control input $f$, see Definition \ref{def:weaksol}. The uniqueness of this weak solution follows from Proposition \ref{pr:wellposed}. \hfill$\square$
\end{remark}

\section{Generating functions } \label{sec3}

In this paper we solve a motion planning problem for the beam model \eqref{eq:beam1}-\eqref{eq:beam4} by using the generating functions approach to flatness. 
Accordingly, we consider functions $w\in C^\infty([0,T]; C^4[0,L])$ which satisfy \eqref{eq:beam1}-\eqref{eq:beam4} for some $f\in C^\infty[0,T]$ and can be expressed as \vspace{-1mm}
\begin{equation}\label{eq:formalsoln}
 w(x,t) = \sum_{k\geq 0} g_k(x)y_1^{(2k)}(t) + \sum_{k\geq 0} h_k(x)y_2^{(2k)}(t) \vspace{-1.5mm}
\end{equation}
for all $x\in[0,L]$ and $t\in [0,T]$. Here $g_k$ and $h_k$ belonging to $C^4[0,L]$ are certain generating functions and $y_1, y_2\in G_s[0,T]$ with $1<s<2$ are two outputs which satisfy a constraint, see \eqref{eq:2ndinput}. The expressions for the generating functions are presented later in this section. In Proposition \ref{pr:wisclasic} we show that, given $y_1, y_2\in G_s[0,T]$ which satisfy \eqref{eq:2ndinput}, the function $w$ determined by the series in \eqref{eq:formalsoln} indeed belongs to $C^\infty([0,T]; C^4[0,L])$ and satisfies \eqref{eq:beam1}-\eqref{eq:beam4} for some $f\in C^\infty[0,T]$. \vspace{-1mm}

We require the outputs $y_1$ and $y_2$ to satisfy \vspace{-1mm}
\begin{equation} \label{eq:flatoutputs}
 y_1(t) = w(0,t), \qquad y_2(t) = w_x(0,t). \vspace{-1mm}
\end{equation}
In \eqref{eq:beam1}-\eqref{eq:beam3} and \eqref{eq:flatoutputs}, we replace $w$ with the series in \eqref{eq:formalsoln} formally. Then, for each $k\geq0$, equating the coefficients of $y_1^{(2k)}$ and $y_2^{(2k)}$ on either sides of the resulting expressions yields a sequence of initial value ODEs for $g_k$ and $h_k$ which can be solved recursively. These ODEs are given below. 
\vspace{-1mm}

The sequence of fourth-order linear ODEs which must be solved recursively, on the interval $x\in[0,L]$, for obtaining the real-valued generating functions $g_k$ are as follows: $g_0$ is obtained by solving the ODE \vspace{-1mm}
\begin{align}
  &\qquad\quad(EI g_{0,xx})_{xx}(x) =0, \label{eq:g0ODE}\\
  &\qquad g_0(0) = 1, \qquad  g_{0,x}(0) = 0, \label{eq:g0IC1}\\
  &g_{0,xx}(0) = 0, \qquad (EI g_{0,xx})_x(0) = 0,\label{eq:g0IC2}
\end{align}
$g_1$ is obtained by solving the ODE \vspace{-1mm}
\begin{align}
  &\quad(EI g_{1,xx})_{xx}(x) + \rho(x)g_0(x)=0, \label{eq:g1ODE}\\
  &\qquad g_1(0) = 0, \qquad  g_{1,x}(0) = 0, \label{eq:g1IC1}\\
  &g_{1,xx}(0) = 0, \qquad (EI g_{1,xx})_x(0) = -m,\label{eq:g1IC2}
\end{align}
and $g_k$ for $k\geq 2$ is obtained by solving the ODE \vspace{-1mm}
\begin{align}
  &\quad(EI g_{k,xx})_{xx}(x) + \rho(x)g_{k-1}(x)=0, \label{eq:gkODE}\\
  &\qquad g_k(0) = 0, \qquad  g_{k,x}(0) = 0,\label{eq:gkIC1}\\
  &\ \ g_{k,xx}(0) = 0, \qquad (EI g_{k,xx})_x(0) = 0.\label{eq:gkIC2}
\end{align}
The sequence of fourth-order linear ODEs which must be solved recursively, on the interval $x\in[0,L]$, for obtaining the real-valued generating functions $h_k$ are as follows: $h_0$ is obtained by solving the ODE \vspace{-1mm}
\begin{align}
  &\quad\qquad(EI h_{0,xx})_{xx}(x) =0, \label{eq:h0ODE}\\
  &\qquad h_0(0) = 0, \qquad  h_{0,x}(0) = 1, \label{eq:h0IC1}\\
  &h_{0,xx}(0) = 0, \qquad (EI h_{0,xx})_x(0) = 0,\label{eq:h0IC2}
\end{align}
$h_1$ is obtained by solving the ODE \vspace{-1mm}
\begin{align}
  &\ \ \quad (EI h_{1,xx})_{xx}(x) + \rho(x) h_{0}(x)=0,  \label{eq:h1ODE}\\
  &\ \ \qquad h_1(0) = 0, \qquad  h_{1,x}(0) = 0, \label{eq:h1IC1}\\
  &EI(0)h_{1,xx}(0) = J, \qquad (EI h_{1,xx})_x(0) = 0, \label{eq:h1IC2}
\end{align}
and $h_k$ for $k\geq 2$ is obtained by solving the ODE \vspace{-1mm}
\begin{align}
  &\quad (EI h_{k,xx})_{xx}(x)\! + \rho(x)h_{k-1}(x)=0, \label{eq:hkODE}\\
  &\qquad h_k(0) = 0, \qquad  h_{k,x}(0) = 0, \label{eq:hkIC1}\\
  & h_{k,xx}(0) = 0, \qquad (EI h_{k,xx})_x(0) = 0.\label{eq:hkIC2}  \\[-4.5ex]\nonumber
\end{align}

In the following proposition we show that the generating functions $g_k$ and $h_k$ belong to $C^4[0,L]$ and derive some estimates for them. 
\begin{proposition}\label{pr:genfnest}
For each $k\geq1$ the generating functions $g_k$ and $h_k$ belong to $C^4[0,L]$ and there exist positive constants $R_1$ and $R_2$ independent of $k$ such that the following estimates hold for all $x\in [0,L]$: \vspace{-1mm}
\begin{align}
 &|g_k(x)|\leq \frac{R_1^k\, x^{4k-1}}{(4k-1)!}, \qquad   |g_{k,x}(x)| \leq \frac{R_1^k \,x^{4k-2}}{(4k-2)!}, \label{eq:estggx}\\[0.5ex]
 &|h_k(x)|\leq \frac{R_2^k\, x^{4k-2}}{(4k-2)!}, \qquad |h_{k,x}(x)| \leq \frac{R_2^k\, x^{4k-3}}{(4k-3)!}. \label{eq:esthhx} \\[-4.5ex]\nonumber
\end{align}
\end{proposition}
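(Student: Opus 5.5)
Since $EI\in C^4[0,L]$ is strictly positive and $\rho\in C^4[0,L]$, I will turn each fourth-order ODE into an explicit Volterra-type integral formula. For any $k\geq1$ write $\phi_k$ for $g_k$ or $h_k$ and $\phi_{k-1}$ for the previous function. Integrating $(EI\phi_{k,xx})_{xx}=-\rho\phi_{k-1}$ twice from $0$, dividing by $EI$, and integrating twice more gives
\begin{align*}
 \phi_k(x) &= p_k(x) - \int_0^x \int_0^{\xi} \frac{1}{EI(\xi')}\int_0^{\xi'}\!\!\int_0^{\eta} \rho(s)\phi_{k-1}(s)\,\dd s\,\dd\eta\,\dd\xi'\,\dd\xi .
\end{align*}
Here $p_k$ comes from the initial data. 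For $k\geq2$ the data are all zero, so $p_k=0$. For $g_1$, $p_1(x)=-m\int_0^x\int_0^\xi \frac{\xi'}{EI(\xi')}\dd\xi'\dd\xi$. For $h_1$, $p_1(x)=J\int_0^x\int_0^\xi \frac{1}{EI(\xi')}\dd\xi'\dd\xi$. This formula shows directly that $\phi_k\in C^4[0,L]$ whenever $\phi_{k-1}\in C[0,L]$: the right side is continuous after four integrations, and the division by $EI\in C^4$ costs only two orders. By induction, starting from $g_0,h_0$ (which are solved explicitly from the same formula with $\rho\phi_{-1}$ replaced by $0$), all $g_k,h_k\in C^4[0,L]$. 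I will also record the base bounds: $g_0\equiv1$ and $|h_0(x)|=|x|\le x$.

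Next I set $a=\sup|\rho|$ and $b=\sup 1/EI$. The quadruple integral of a monomial gives $\int\!\int\!\int\!\int s^{n}=x^{n+4}\,n!/(n+4)!$, so
\[
\Big|\iiiint\rho\phi_{k-1}/EI\Big|\le ab\,C\,\frac{x^{n+4}}{(n+4)!}
\qquad\text{whenever } |\phi_{k-1}(s)|\le C\,\frac{s^{n}}{n!}.
\]
I will carry out the induction on $k\geq1$ with the claims $|g_k|\le R_1^k x^{4k-1}/(4k-1)!$ and $|h_k|\le R_2^kx^{4k-2}/(4k-2)!$.

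For $k=1$ the bounds follow from the two pieces of the integral formula.
\begin{itemize}
\item[] For $g_1$: $|p_1|\le mb\,x^3/3!$, and the integral term is at most $ab\,x^4/4!\le ab\,L\,x^3/3!$. So $R_1\ge b(m+aL)$ suffices.
\item[] For $h_1$: $|p_1|\le Jb\,x^2/2$, and the integral term is at most $ab\,x^5/5!\le abL^3x^2/2$.
\end{itemize}
For $k\ge2$ we have $p_k=0$, and the monomial computation gives $|g_k|\le ab R_1^{k-1}x^{4k-1}/(4k-1)!$. The induction therefore closes as soon as $R_1\ge ab$, and similarly $R_2\ge ab$.

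The derivative bounds come from the same induction, applied to the formula differentiated once, which is a triple integral. That triple integral of $s^{4k-5}/(4k-5)!$ gives $x^{4k-2}/(4k-2)!$. For $k=1$ I will check the corresponding polynomial-type terms separately, giving $x^2/2$ for $g_1$ and $x$ for $h_1$, together with the $L$-absorption trick used above. Finally I take $R_1,R_2$ to be the maxima of all the constants that arise.

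The main, though mild, obstacle is bookkeeping for the $k=1$ base case. There the boundary data $m$ and $J$ produce lower powers of $x$ than the recursion does, and the higher-order terms must be absorbed using $x\le L$. This is exactly why the estimates are stated with exponents $4k-1$ and $4k-2$ rather than $4k$ and $4k+1$. No actual cancellation is needed: the triangle inequality inside the nested integrals suffices.
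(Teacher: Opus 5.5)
Your proposal is correct and follows essentially the same route as the paper. Both proofs solve the initial-value problems as explicit nested integrals. In the base case $k=1$, the boundary-data terms in $m$ and $J$ dominate, and the higher powers of $x$ are absorbed using $x\le L$; this is why the paper's $R_1$ and $R_2$ carry the factors $\max\{1,L\}$ and $\max\{1,L^3\}$. Both then close the induction by bounding the integrands with $\max\rho$ and $\min EI$, and your requirement $R_1,R_2\ge ab$ for the inductive step is automatically met by the paper's explicit constants.
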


\begin{pf}
Solving \eqref{eq:g0ODE}-\eqref{eq:g0IC2} for $g_0$ we get \vspace{-1mm}
\begin{equation} \label{eq:g0}
 g_0(x)=1 \FORALL x\in[0,L].
\end{equation}
Solving \eqref{eq:g1ODE}-\eqref{eq:g1IC2} for $g_1$ we get \vspace{-1mm}
\begin{align}
  &g_1(x) = -\int_0^x \!\int_0^{s_1}\!\int_0^{s_2}\! \int_0^{s_3 }\! \frac{\rho(s_4)}{EI(s_2)}\dd s_4 \dd s_3 \dd s_2 \dd s_1 \nonumber\\
  &\hspace{15mm}-m\int_0^x \int_0^{s_1} \int_0^{s_2}\frac{1}{EI(s_2)}\dd s_3 \dd s_2 \dd s_1 , \label{eq:g1sol} \\[0.5ex]
  &g_{1,x}(x) = -\int_0^x \!\int_0^{s_1}\!\int_0^{s_2}\! \frac{\rho(s_3)}{EI(s_1)}\dd s_3 \dd s_2 \dd s_1 \nonumber\\
  &\hspace{20mm}-m\int_0^x \int_0^{s_1} \frac{1}{EI(s_1)} \dd s_2 \dd s_1 \label{eq:g1xsol} \\[-4.5ex]\nonumber
\end{align}
for each $x\in[0,L]$. Since $EI$ and $\rho$ are strictly positive functions belonging to $C^4[0,L]$, it follows from the above equations via successive integrations that $g_1 \in C^4[0,L]$ and the estimates in \eqref{eq:estggx} hold for $k=1$ with \vspace{-3mm}
\begin{equation}\label{eq:R1defn}
  R_1 = \Bigg(\frac{\max_{x\in [0,L]}\rho(x)+m}{\min_{x\in [0,L]} EI(x)} \Bigg) \max\{1,L\}. \vspace{-3mm}
\end{equation}

Solving \eqref{eq:gkODE}-\eqref{eq:gkIC2} for $g_k$ we get that \vspace{-1mm}
\begin{align}
 &\!g_k(x) = -\int_{0}^{x}\!\int_{0}^{s_1}\!\!\int_{0}^{s_2}\!\! \int_{0}^{s_3}\! \frac{\rho(s_4)g_{k-1}(s_4)}{EI(s_2)}\dd s_4 \dd s_3 \dd s_2 \dd s_1,  \label{eq:gksol} \\
 &\!g_{k,x}(x) = -\int_{0}^{x}\!\int_{0}^{s_1}\!\!\int_{0}^{s_2}\! \frac{\rho(s_3)g_{k-1}(s_3)}{EI(s_1)} \dd s_3 \dd s_2 \dd s_1 \label{eq:gkxsol} \\[-4.8ex]\nonumber
\end{align}
for each $k\geq2$ and $x\in[0,L]$. Since $g_1,1\big/EI, \rho \in C^4[0,L]$  and $g_k$ is obtained by integrating $g_{k-1}, 1\big/EI$ and $\rho$ for each $k\geq2$, see \eqref{eq:gksol}, we get that $g_k\in C^4[0,L]$ for all $k\geq1$. Suppose that the estimates in \eqref{eq:estggx} hold for some $k=n$ with $n\geq 1$. Taking $k=n+1$ in \eqref{eq:gksol} and \eqref{eq:gkxsol}, and then bounding the integrand on the right using $\max_{x \in [0,L]} \rho(x)$, $\min_{x \in [0,L]} EI(x)$ and the estimate for $|g_n(x)|$ obtained from \eqref{eq:estggx} with $k=n$, it follows via successive integrations 
that the estimates in \eqref{eq:estggx} hold for $k=n+1$. We have now shown that the estimates in \eqref{eq:estggx} hold for $k=1$ and that they hold for $k=n+1$ if they hold for $k=n$ for any $n\geq1$. So it follows via the principle of mathematical induction that the estimates in \eqref{eq:estggx} hold for all $k\geq1$. This completes the proof of the claims regarding the generating functions $g_k$.

Next we will prove the estimates in \eqref{eq:esthhx}. Solving \eqref{eq:h0ODE}-\eqref{eq:h0IC2} for $h_0$ we get \vspace{-1mm}
\begin{equation} \label{eq:h0}
 h_0(x)=x \FORALL x\in[0,L]. \vspace{-1mm}
\end{equation}
Solving \eqref{eq:h1ODE}-\eqref{eq:h1IC2} for $h_1$ we get \vspace{-1mm}
\begin{align}
 &\hspace{-3mm} h_1(x) = -\int_0^x \!\int_0^{s_1}\!\int_0^{s_2}\! \int_0^{s_3 }\! \frac{s_4\rho(s_4)}{EI(s_2)}\dd s_4 \dd s_3 \dd s_2 \dd s_1 \nonumber\\
 &\hspace{5mm}+J\int_0^x \int_0^{s_1}\frac{1}{EI(s_2)}\dd s_2 \dd s_1, \label{eq:h1sol} \\[0.5ex]
 &\hspace{-3mm} h_{1,x}(x) = -\int_0^x \!\int_0^{s_1}\!\int_0^{s_2}\! \frac{s_3\rho(s_3)}{EI(s_1)} \dd s_3 \dd s_2 \dd s_1 \nonumber\\
 &\hspace{5mm}+J\int_0^x \frac{1}{EI(s_1)} \dd s_1 \label{eq:h1xsol} \\[-4.5ex]\nonumber
\end{align}
for each $x\in[0,L]$. Since $EI$ and $\rho$ are strictly positive functions belonging to $C^4[0,L]$, it follows from the above equations via successive integrations that $h_1 \in C^4[0,L]$ and the estimates in \eqref{eq:esthhx} hold for $k=1$ with \vspace{-2mm}
\begin{equation}\label{eq:R2defn}
  R_2 = \Bigg(\frac{\max_{x\in [0,L]}\rho(x)+J}{\min_{x\in [0,L]} EI(x)} \Bigg) \max\{1,L^3\}.  \vspace{-1.5mm}
\end{equation}
Solving \eqref{eq:hkODE}-\eqref{eq:hkIC2} for $h_k$ we get that for $k\geq2$, \vspace{-1mm}
\begin{align}
 &h_k(x) = -\int_{0}^{x}\!\int_{0}^{s_1}\!\!\int_{0}^{s_2}\!\! \int_{0}^{s_3} \!\!\frac{\rho(s_4)h_{k-1}(s_4)}{EI(s_2)}\dd s_4 \dd s_3 \dd s_2 \dd s_1, \label{eq:hksol} \\
 &h_{k,x}(x) = -\int_{0}^{x}\!\int_{0}^{s_1}\!\!\int_{0}^{s_2}\!\! \frac{\rho(s_3)h_{k-1}(s_3)}{EI(s_1)} \dd s_3 \dd s_2 \dd s_1. \nonumber\\[-4.5ex]\nonumber
\end{align}
The claim that $h_k\in C^4[0,L]$ and the estimates in \eqref{eq:esthhx} hold for all $k\geq1$ with $R_2$ given in \eqref{eq:R2defn} can be established by mimicking the induction argument given below \eqref{eq:gksol}-\eqref{eq:gkxsol}. \vspace{-4mm}\hfill $\blacksquare$
\end{pf}

The next result shows that, if $y_1$ and $y_2$ belonging to $G_s[0,T]$ are chosen suitably, then $z\in C([0,T];Z)$ determined by $w$ in \eqref{eq:formalsoln} via \eqref{eq:wdetz} is a solution of \eqref{eq:beam1}-\eqref{eq:beam4}.  

\begin{proposition}\label{pr:wisclasic}
Fix $T>0$ and $s\in (1,2)$. Suppose that the functions $y_1, y_2\in G_s[0,T]$ satisfy the following infinite order differential equation: for $t\in[0,T]$, \vspace{-1mm}
\begin{equation}\label{eq:2ndinput}
 \sum_{k\geq 0} g_{k,x}(L)y_1^{(2k)}(t) + \sum_{k\geq 0} h_{k,x}(L)y_2^{(2k)}(t) = 0. \vspace{-2mm}
\end{equation}
Then $w$ given in \eqref{eq:formalsoln} belongs to $C^\infty([0,T];C^4[0,L])$, so that $w(0,\cdot)$, $w(L,\cdot)$  and $w_x(0,\cdot)$ belong to $C^\infty[0,T]$, and $w$ satisfies \eqref{eq:beam1}-\eqref{eq:beam4} with $f(t)=w(L,t)$. Furthermore, the function $z\in C([0,T];Z)$ determined by $w$ via \eqref{eq:wdetz} is the unique solution of \eqref{eq:beam1}-\eqref{eq:beam4} for the initial state $z_0=z(0)$ and control input $f(t)=w(L,t)$. \vspace{-4mm}
\end{proposition}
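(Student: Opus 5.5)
The plan is to show that the series in \eqref{eq:formalsoln}, together with all the series obtained from it by term-by-term differentiation in $x$ (up to order four) and in $t$ (any order), converges uniformly on $[0,L]\times[0,T]$. Then $w\in C^\infty([0,T];C^4[0,L])$ and all derivatives may be computed termwise. The properties of the generating functions will then give the PDE and the boundary conditions at $x=0$. The constraint \eqref{eq:2ndinput} will give the clamped condition $w_x(L,t)=0$, and Remark \ref{rm:classical} will give the identification with the unique weak solution.

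\textbf{Step 1 (convergence).} For $y\in G_s[0,T]$ we have $|y^{(2k+j)}(t)|\le D^{2k+j+1}((2k+j)!)^s$. Proposition \ref{pr:genfnest} gives $|g_k|\le R_1^kL^{4k-1}/(4k-1)!$ and $|g_{k,x}|\le R_1^kL^{4k-2}/(4k-2)!$, and similarly for $h_k$. For the second to fourth $x$-derivatives I would use the integral representations \eqref{eq:gksol}. They give $EI\,g_{k,xx}=-\int\!\!\int \rho g_{k-1}$, $(EI g_{k,xx})_x=-\int\rho g_{k-1}$ and $(EI g_{k,xx})_{xx}=-\rho g_{k-1}$, so each of these quantities is bounded by $C R_1^{k}L^{4k-4}/(4k-4)!$. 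Since $EI\in C^4$ and is bounded below, the same kind of bound holds for $g_{k,xx}$, $g_{k,xxx}$ and $g_{k,xxxx}$. The general term for the $j$-th time derivative of the $i$-th $x$-derivative ($i\le4$) is therefore bounded by
\[
C\,D^{j+1}(DR^{1/2})^{2k}L^{4k-4}\,\frac{((2k+j)!)^s}{(4k-4)!}.
\]
Because $((2k+j)!)^s\le C_j\,(2k)!^s(2k)^{js}$ and $(4k-4)!\ge c\,((2k)!)^2 (2k)^{-4}\,4^{-2k}$ roughly, the ratio behaves like $(2k)!^{s-2}$ times a geometric factor and a polynomial. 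Since $s<2$ this decays superexponentially, and the series converges. This Stirling-type bookkeeping is the main technical step, but it is routine. It yields $w\in C^\infty([0,T];C^4[0,L])$, and restricting to $x=0$ and $x=L$ gives $w(0,\cdot),w(L,\cdot),w_x(0,\cdot)\in C^\infty[0,T]$.

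\textbf{Step 2 (equations).} I would differentiate termwise. The contribution of $(EIw_{xx})_{xx}$ is $\sum_{k\ge1}(EIg_{k,xx})_{xx}y_1^{(2k)}+\dots=-\sum_{k\ge1}\rho g_{k-1}y_1^{(2k)}-\dots$, since $g_0$ and $h_0$ are annihilated. This is exactly $-\rho w_{tt}$ after reindexing, which gives \eqref{eq:beam1}. At $x=0$ the initial conditions give $w(0,t)=y_1$ and $w_x(0,t)=y_2$. Also $(EIw_{xx})_x(0,t)=-m\ddot y_1=-m w_{tt}(0,t)$ and $EI(0)w_{xx}(0,t)=J\ddot y_2=Jw_{xtt}(0,t)$, which give \eqref{eq:beam2} and \eqref{eq:beam3}. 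At $x=L$, $w_x(L,t)$ equals the left side of \eqref{eq:2ndinput}, which is zero, and $w(L,t)=f(t)$ by definition. This gives \eqref{eq:beam4}.

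\textbf{Step 3 (solution in $Z$).} Since $w\in C^2([0,T];C^4[0,L])\subset C([0,T];H^4)\cap C^2([0,T];L^2)$ and $w(0,\cdot),w_x(0,\cdot)\in C^2$, Remark \ref{rm:classical} applies with $f\in C^\infty\subset C^2$. Compatibility $f(0)=w(L,0)$ holds trivially, and $w_x(L,t)=0$ ensures $z(t)\in Z$. Proposition \ref{pr:wellposed} then gives that $z$ is the unique weak solution for $z_0=z(0)$ and this $f$.
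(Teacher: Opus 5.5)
Your proposal is correct and follows essentially the same route as the paper. Both arguments bound $g_k$, $h_k$ and their first four $x$-derivatives using the integral representations, combine these with the Gevrey bounds on $y_1,y_2$, and apply the Weierstrass M-test using $s<2$; both then verify \eqref{eq:beam1}--\eqref{eq:beam4} termwise from the ODEs and initial conditions, use \eqref{eq:2ndinput} for $w_x(L,t)=0$, and invoke Remark~\ref{rm:classical} for uniqueness. One harmless slip: since $(EI g_{k,xx})_{xx}=-\rho g_{k-1}$, the fourth-derivative bound has $(4k-5)!$ (and $(4k-6)!$ for $h_k$) in the denominator, as in the paper, rather than $(4k-4)!$; this only adds a polynomial factor in $k$ and does not affect convergence.
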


\begin{pf}
Recall that $EI$ and $\rho$ are strictly positive functions belonging to $C^4[0,L]$. Differentiating the integral expression for $g_k$ in \eqref{eq:gksol} as many times as required, and then bounding the integrand using $\max_{x \in [0,L]} \rho(x)$, $\min_{x \in [0,L]} EI(x)$, $\|EI\|_{C^4[0,L]}$ and the estimate for $g_{k-1}$ from \eqref{eq:estggx}, it follows via successive integrations  that the functions $g_k$, $g_{k,x}$, $g_{k,xx}$, $g_{k,xxx}$ and $g_{k,xxxx}$ are uniformly bounded on $[0,L]$ by $C_g^k/(4k-5)!$. Here $C_g>0$ is a constant independent of $k$. Similarly using the expression for $h_k$ in \eqref{eq:hksol} and the estimate for $h_{k-1}$ in \eqref{eq:esthhx}, we get that the functions $h_k$, $h_{k,x}$, $h_{k,xx}$, $h_{k,xxx}$ and $h_{k,xxxx}$ are uniformly bounded on $[0,L]$ by $C_h^k/(4k-6)!$, where $C_h>0$ is a constant independent of $k$. Since $y_1,y_2\in G_s[0,T]$ for some $s\in (1,2)$, it follows that
$y_1^{(k)}$ and $y_2^{(k)}$ are uniformly bounded on $[0,T]$ by $C_y^{k+1}(k!)^{s}$, where $C_y>0$ is a constant independent of $k$. Using the above uniform bounds, we can bound the terms $g_k(x)y_1^{(2k)}(t)$ and $h_k(x)y_2^{(2k)}(t)$ in the series in \eqref{eq:formalsoln} uniformly on $[0,L]\times [0,T]$ by \vspace{-2mm}
$$C_k = \max\{C_g,C_h\}^k C_y^{2k+1} \frac{(2k!)^s}{(4k-6)!}. \vspace{-2mm}$$
It can be verified using the ratio test that $\sum_{k\geq 0} C_k <\infty$. We can therefore appeal to the Weierstrass M-test to conclude that the series in \eqref{eq:formalsoln} converges uniformly on $[0,L]\times [0,T]$ to a continuous function $w$. We can similarly conclude that every series obtained by termwise differentiation of the series for $w$ in \eqref{eq:formalsoln}, with respect to $x$ (up to four times) and $t$ (any number of times), converges uniformly on $[0,L]\times[0,T]$ to a continuous function. This implies that the derivatives of $w$ with respect to $x$ (up to four times) and $t$ (any number of times) are given by the series obtained by termwise differentiation of the series for $w$ in \eqref{eq:formalsoln} and $w$ has the regularity mentioned in the statement of this proposition.

Using \eqref{eq:g0ODE}, \eqref{eq:g1ODE}, \eqref{eq:gkODE} and \eqref{eq:h0ODE}, \eqref{eq:h1ODE}, \eqref{eq:hkODE} it is easy to check that the series corresponding to $w_{tt}$ and $-(EI w_{xx})_{xx}$ are the same and hence $w$ satisfies the PDE \eqref{eq:beam1}. Taking $x=0$ in the series for $w_{tt}$, $(EI w_{xx})_x$, $w_{xtt}$ and $EIw_{xx}$ and using the initial conditions for $g_k$ in \eqref{eq:g0IC1}, \eqref{eq:g0IC2}, \eqref{eq:g1IC1}, \eqref{eq:g1IC2}, \eqref{eq:gkIC1}, \eqref{eq:gkIC2} and the initial conditions for $h_k$ in \eqref{eq:h0IC1}, \eqref{eq:h0IC2}, \eqref{eq:h1IC1}, \eqref{eq:h1IC2}, \eqref{eq:hkIC1}, \eqref{eq:hkIC2} it follows that $w$ satisfies the boundary conditions \eqref{eq:beam2}-\eqref{eq:beam3}. Finally taking  $f(t)=w(L,t)$ and using \eqref{eq:2ndinput} (which means $w_x(L,t)=0$) we get that $w$ satisfies the boundary condition \eqref{eq:beam4}. In summary, $w$ satisfies \eqref{eq:beam1}-\eqref{eq:beam4} and has the regularity required in Remark \ref{rm:classical}. Hence it follows from the remark that $z$ determined by $w$ via \eqref{eq:wdetz} is the unique solution of \eqref{eq:beam1}-\eqref{eq:beam4} for the initial state $z_0=z(0)$ and control input $f(t)$.  $\hfill \blacksquare$
\end{pf}


\section{Motion planning  \vspace{0mm}} \label{sec4} 

In this section, we present our results for the motion planning problem  of  transferring the beam model \eqref{eq:beam1}-\eqref{eq:beam4} from an initial state $z_0$ to a final state $z_T$ over a prescribed time-interval $[0,T]$ using an appropriate control input $f$. We prove that the desired transfer is possible if the initial and final states belong to a certain set, which contains the steady-states of the beam model and is independent of $T$, see Theorem \ref{th:main_result} and  Remark \ref{rm:ss}.


In the following proposition, we first describe an approach for constructing functions $y_1$ and $y_2$ that satisfy \eqref{eq:2ndinput}. Below we will need the estimate \vspace{-1mm}
\begin{equation} \label{eq:binom}
 (p+q)! \leq 2^{p+q} p! q! \FORALL p,q\in\nline. \vspace{-1mm}
\end{equation}
This estimate follows from the fact that the $(p+1)^{\rm th}$-term (given by the expression $(p+q)!/(p! q!)$) in the binomial expansion of $(1+1)^{p+q}$ is less than $(1+1)^{p+q}$.

\begin{proposition}\label{pr:psolution}
Fix $s\in(1,2)$. Let the operators $\Lscr_1$ and $\Lscr_2$ be defined on $G_s[0,T]$ as follows: For $y \in G_s[0,T]$,
$$\Lscr_1 y =\sum_{k\geq 0} g_{k,x}(L) y^{(2k)}, \qquad \Lscr_2 y =\sum_{k\geq 0} h_{k,x}(L) y^{(2k)}. $$
Then $y_1=\Lscr_2 y$ and $y_2=-\Lscr_1 y$ belong to $G_s[0,T]$ for each $y \in G_s[0,T]$ and satisfy \eqref{eq:2ndinput}. \vspace{-1mm}
\end{proposition}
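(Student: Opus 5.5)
The plan has two parts. First I show that $\Lscr_1$ and $\Lscr_2$ map $G_s[0,T]$ into itself. Then I show that the pair $y_1=\Lscr_2 y$, $y_2=-\Lscr_1 y$ satisfies \eqref{eq:2ndinput}. The second part is purely algebraic once the first is in hand.

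\textbf{Step 1: $\Lscr_1 y,\Lscr_2 y\in G_s[0,T]$.} Fix $y\in G_s[0,T]$, so $\sup_t|y^{(n)}(t)|\le D^{n+1}(n!)^s$.

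By Proposition \ref{pr:genfnest} with $x=L$,
$$|g_{k,x}(L)|\le \frac{R_1^kL^{4k-2}}{(4k-2)!}, \qquad |h_{k,x}(L)|\le \frac{R_2^kL^{4k-3}}{(4k-3)!}$$
for $k\ge1$. The $k=0$ terms are explicit: $g_{0,x}=0$ and $h_{0,x}=1$.

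Consider the $m$-th derivative of the termwise series $\sum_k g_{k,x}(L)y^{(2k+m)}$. Each term is bounded by
$$\frac{R^kL^{4k}}{(4k-3)!}\,D^{2k+m+1}\bigl((2k+m)!\bigr)^s,$$
with $R=\max\{R_1,R_2\}$ and, for simplicity, $L$ replaced by $\max\{1,L\}$.

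Using \eqref{eq:binom}, $(2k+m)!\le 2^{2k+m}(2k)!\,m!$. This splits the bound into
$$D^{m+1}2^{sm}(m!)^s\cdot \frac{(R L^4D^2 4^s)^k\,((2k)!)^s}{(4k-3)!}.$$

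Since $s<2$, $((2k)!)^s/(4k-3)!$ decays superexponentially. Indeed, $((2k)!)^2\le (4k)!$ and the ratio behaves like $(2k)!^{s-2}$ times a polynomial in $k$. Hence $\sum_k (RL^4D^24^s)^k((2k)!)^s/(4k-3)!=:K<\infty$, which can be verified by the ratio test, as in the proof of Proposition \ref{pr:wisclasic}.

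So every termwise-differentiated series converges uniformly on $[0,T]$ by the Weierstrass M-test. This justifies termwise differentiation, gives $\Lscr_i y\in C^\infty[0,T]$, and yields
$$\sup_t|(\Lscr_i y)^{(m)}(t)|\le K D^{m+1}2^{sm}(m!)^s\le \tilde D^{m+1}(m!)^s$$
with $\tilde D=\max\{K,1\}\,2^sD$. Hence $\Lscr_i y\in G_s[0,T]$.

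\textbf{Step 2: the constraint \eqref{eq:2ndinput}.} By Step 1, $y_1^{(2k)}=\sum_j h_{j,x}(L)y^{(2j+2k)}$ and $y_2^{(2k)}=-\sum_j g_{j,x}(L)y^{(2j+2k)}$. Substituting into \eqref{eq:2ndinput}, the left side becomes
$$\sum_k\sum_j g_{k,x}(L)h_{j,x}(L)y^{(2k+2j)}-\sum_k\sum_j h_{k,x}(L)g_{j,x}(L)y^{(2k+2j)}.$$
Formally this vanishes: swapping $j$ and $k$ in the second double sum turns it into the first. This is the commutativity of $\Lscr_1$ and $\Lscr_2$, viewed as constant-coefficient infinite-order operators in $\dd/\dd t$.

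To make the argument rigorous, I need absolute convergence of each double series $\sum_{j,k}|g_{k,x}(L)h_{j,x}(L)|\,|y^{(2k+2j)}|$, so that Fubini for series allows the rearrangement.

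\textbf{Main obstacle.} The delicate point is the absolute convergence of these double series. I would bound $((2k+2j)!)^s\le 2^{2(k+j)s}((2k)!)^s((2j)!)^s$ using \eqref{eq:binom}. The double series then factors into a product of two single series of the type already shown convergent in Step 1. The product is finite, so absolute convergence holds and the rearrangement is justified. With that, the cancellation in Step 2 is rigorous and the proposition follows.
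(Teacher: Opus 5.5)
Your proposal is correct and follows essentially the same route as the paper: Step 1 is the same Weierstrass M-test argument combined with \eqref{eq:binom} to obtain the Gevrey bounds, and Step 2 is the same commutativity argument, where absolute convergence of the double series justifies the rearrangement. The only minor difference is how you get absolute convergence. You apply \eqref{eq:binom} to $(2k+2j)!$, so the double series factors into a product of two convergent single series. The paper instead groups terms along the diagonals $j+k=l$ and applies \eqref{eq:binom} to the denominators $(4j-2)!\,(4k-3)!$. Both versions are valid.
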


\begin{pf}
Let $y\in G_s[0,T]$. Then \vspace{-1.5mm}
\begin{equation}\label{eq:pest}
 \sup_{t\in [0,T]}|y^{(k)}(t)|\leq D^{k+1} (k\m!)^s \FORALL k\geq 0 \vspace{-1.5mm}
\end{equation}
and some constant $D>0$. For each $n\geq0$, consider the function series \vspace{-1mm}
\begin{equation} \label{eq:L1derseries}
 \sum_{k\geq 0} h_{k,x}(L) y^{(2k+n)}(t) \quad \textrm{with} \quad t\in[0,T], \vspace{-1.5mm}
\end{equation}
obtained by termwise $n$-times differentiation of the series for $\Lscr_2 y$. Using the estimate for $h_{k,x}$ in \eqref{eq:esthhx} and the estimate in \eqref{eq:pest}, we can bound the $(k+1)^{\rm th}$ term $h_{k,x}(L) y^{(2k+n)}(t)$ in the above series by \vspace{-2mm}
$$D_{k,n} = R_2^{k} L^{4k-3} D^{2k+n+1} \frac{((2k+n)!)^s}{(4k-3)!}, \vspace{-2mm}$$
uniformly on $[0,T]$. Since $s<2$, it follows via the ratio test that $\sum_{k \geq 0} D_{k,n} < \infty$ for each $n\geq0$. We can therefore conclude by applying the Weierstrass M-test that the series in \eqref{eq:L1derseries} converges uniformly on $[0,T]$ to a continuous function for each $n\geq0$. This implies that $y_1=\Lscr_2 y$ is in $C^\infty[0,T]$ and $y_1^{(n)}(t)$ is given by the series in \eqref{eq:L1derseries}, i.e. $y_1^{(n)} = \Lscr_2 y^{(n)}$. From the above discussion, we have $\sup_{t\in [0,T]}|y_1^{(n)}(t)|\leq \sum_{k\geq 0} D_{k,n}$. Using \eqref{eq:binom} with $p=2k$ and $q=n$ to bound $(2k+n)!$ in the expression for $D_{k,n}$ we get after a simple calculation that $D_{k,n} \leq 2^{ns} (n!)^s D^{n} C_0^{k+1} ((2k)!)^s\big/{(4k-3)!}$ for some $C_0>0$ independent of $n$ and $k$. Combining the above estimates we get \vspace{-2mm}
$$\sup_{t\in [0,T]}|y_1^{(n)}(t)|\leq 2^{ns}D^n (n!)^s \sum_{k\geq 0} C_0^{k+1} \frac{((2k)!)^s}{(4k-3)!}. \vspace{-2mm}$$
Since $s<2$, applying the ratio test it follows that the series in the above inequality is convergent and therefore $\sup_{t\in [0,T]}|y_1^{(n)}(t)|\leq C^{n+1} (n!)^s$ for some $C>0$ and all $n\geq0$, i.e. $y_1\in G_s[0,T]$. We can similarly show that $y_2^{(n)} = -\Lscr_1 y^{(n)}$ and $y_2\in G_s[0,T]$.

We will now show that $y_1$ and $y_2$ satisfy \eqref{eq:2ndinput}. Note that for each $y\in G_s[0,T]$, we have \vspace{-1mm}
$$\Lscr_1 \Lscr_2 y = \sum_{k\geq 0}\sum_{j\geq 0} g_{k,x}(L) h_{j,x}(L) y^{(2k+2j)}, \vspace{-2mm}$$
$$\Lscr_2 \Lscr_1 y = \sum_{j\geq 0}\sum_{k\geq 0} g_{k,x}(L) h_{j,x}(L) y^{(2k+2j)}.\vspace{1mm}$$
So $\Lscr_1 \Lscr_2 y $ and $\Lscr_2 \Lscr_1 y$ are both double sums which only differ in the order of summation. Consider the following double sum obtained by rearranging the terms in either $\Lscr_1 \Lscr_2 y$ or $\Lscr_2 \Lscr_1 y$: \vspace{-1.5mm}
\begin{equation}\label{eq:doublesum}
 \sum_{l \geq 0} \sum_{j+k=l} g_{j,x}(L)h_{k,x}(L) y^{(2l)}(t). \vspace{-1.5mm}
\end{equation}
We claim that this double sum is absolutely convergent for each $t\in[0,T]$. Indeed, using \eqref{eq:estggx}, \eqref{eq:esthhx} and \eqref{eq:pest} we get \vspace{-2mm}
\begin{align*}
 &\sum_{l \geq 0} \sum_{j+k=l}| g_{j,x}(L)h_{k,x}(L) y^{(2l)}(t)| \\
 \leq & \sum_{l \geq 0} \sum_{j+k=l} \frac{R_1^j R_2^k L^{4j-2} L^{4k-3} D^{2l+1} ((2l)!)^s}{(4j-2)!(4k-3)!}\\
 \leq & \sum_{l\geq 0}\sum_{j+k=l}\frac{R^{l+1} ((2l)!)^s}{(4l-5)!} = \sum_{l\geq 0}\frac{R^{l+1} ((2l)!)^s (l+1)}{(4l-5)!} <\infty.
\end{align*}
Here $R>0$ is some constant, the second inequality is derived by using the estimate in \eqref{eq:binom} with $p=4j-2$ and $q=4k-3$ and the last inequality is obtained by applying the ratio test. Therefore from Fubini's theorem it follows that the sum of the terms in \eqref{eq:doublesum} is independent of the order of summation, i.e. $\Lscr_1 \Lscr_2 y=\Lscr_2 \Lscr_1 y$. So $y_1=\Lscr_2 y$ and $y_2=-\Lscr_1 y$ satisfy $\Lscr_1 y_1 + \Lscr_2 y_2=0$ or equivalently \eqref{eq:2ndinput}. This completes the proof of the proposition. \hfill $\blacksquare$
\end{pf}

\begin{remark} \label{rm:summary}
We can rewrite \eqref{eq:formalsoln} concisely as \vspace{-1mm}
\begin{equation} \label{eq:formalsolnRE}
 w(x,t) = [\Wscr_1 y_1](x,t) + [\Wscr_2 y_2](x,t) \vspace{-1mm}
\end{equation}
for $x\in[0,L]$ and $t\in[0,T]$. Here the operators $\Wscr_1$ and $\Wscr_2$ are as follows: For $y\in G_s[0,T]$ with $s\in(1,2)$, \vspace{-2mm}
\begin{align*}
 &[\Wscr_1 y](x,t) = \sum_{k\geq 0} g_k(x) y^{(2k)}(t), \\
 &[\Wscr_2 y](x,t) = \sum_{k\geq 0} h_k(x) y^{(2k)}(t).\\[-5ex]
\end{align*}
For any $y\in G_s[0,T]$ with $s\in(1,2)$, it follows from Proposition \ref{pr:psolution} that $y_1 = \Lscr_2 y$ and $y_2 = -\Lscr_1 y$ are in $G_s[0,T]$ and satisfy \eqref{eq:2ndinput}. Letting $y_1 = \Lscr_2 y$ and $y_2 = -\Lscr_1 y$ in \eqref{eq:formalsolnRE} and appealing to Proposition \ref{pr:wisclasic} we can conclude that $w = \Wscr_1 \Lscr_2 y - \Wscr_2\Lscr_1 y$ is in $C^\infty([0,T];C^4[0,L])$. Moreover, we get that the function $z\in C([0,T];Z)$ determined by $w$ via the expression $z(t)=[w(\cdot,t) \ w_t(\cdot,t) \ w_t(0,t) \ w_{xt}(0,t)]$ for all $t\in[0,T]$ is the unique solution of \eqref{eq:beam1}-\eqref{eq:beam4} for the initial state $z_0 = z(0)$ and input $f(t)=w(L,t)$. \hfill$\square$
\end{remark}

Recall the operators $\Lscr_1$, $\Lscr_2$, $\Wscr_1$, $\Wscr_2$ from Proposition \ref{pr:psolution} and Remark \ref{rm:summary}. In the next theorem, building on the results in Propositions \ref{pr:genfnest}, \ref{pr:wisclasic} and \ref{pr:psolution}, we prove by construction the existence of a control input $f$ which transfers the beam model \eqref{eq:beam1}-\eqref{eq:beam4} between any two states belonging to a certain subspace of $Z$ over a prescribed time-interval. \vspace{1mm}

\begin{theorem}\label{th:main_result}
Fix a time $T>0$. Consider the set \vspace{-1mm}
\begin{align*}
 &\quad \Phi = \Big\{[v(\cdot,0) \ v_t(\cdot,0) \ v_t(0,0) \ v_{xt}(0,0)] \in Z \m\Big |\m    \nonumber\\
  & v=\Wscr_1 \Lscr_2 y - \Wscr_2 \Lscr_1 y  \ \textrm{for some} \ y\in G_s[0,T], \ s\in (1,2) \Big\}. \vspace{-1mm}
\end{align*}
Let $z_0\in \Phi$ and $z_T \in \Phi$ be given. Then there exists an $f\in C[0,T]$ compatible for $z_0$ and a unique solution $z\in C([0,T];Z)$ of \eqref{eq:beam1}-\eqref{eq:beam4} on the time interval $[0,T]$ for the initial state $z_0$ and input $f$ such that $z(T)=z_T$. \vspace{-3mm}
\end{theorem}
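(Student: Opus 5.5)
The plan is to glue the two trajectories generating $z_0$ and $z_T$ with a Gevrey step function. By definition of $\Phi$ there exist $s_0,s_T\in(1,2)$ and $y^0\in G_{s_0}[0,T]$, $y^T\in G_{s_T}[0,T]$ such that $z_0$ is the time-$0$ value of $v^0=\Wscr_1\Lscr_2 y^0-\Wscr_2\Lscr_1 y^0$, and $z_T$ is the time-$0$ value of $v^T=\Wscr_1\Lscr_2 y^T-\Wscr_2\Lscr_1 y^T$.

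Since the state at time $0$ depends only on the time-$0$ jets, I will first translate $y^T$ so that it is evaluated at time $T$. The ODEs in Proposition \ref{pr:genfnest} have time-invariant coefficients, so shifting time commutes with $\Lscr_i$ and $\Wscr_i$. The difficulty is that $y^T$ is only defined on $[0,T]$. To handle this, I plan to use $\tilde y^T(t)=y^T(t-T)$ on $[T/2,T]$ only in a neighbourhood of $T$; this is only meaningful if $y^T$ extends to negative times. I will sidestep the issue as follows. Fix $s=\max\{s_0,s_T\}\in(1,2)$, so that $G_{s_0},G_{s_T}\subset G_s$. Choose a Gevrey cutoff $\phi\in G_s[0,T]$ of order $s>1$, with $\phi=1$ on $[0,T/3]$ and $\phi=0$ on $[2T/3,T]$; such functions exist precisely because $s>1$. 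Then set
\[
y(t)=\phi(t)\,y^0(t)+(1-\phi(t))\,y^T(T-t).
\]
Time reversal $t\mapsto T-t$ preserves $G_s$. It also commutes with the operators, because only even derivatives $y^{(2k)}$ appear in $\Lscr_i$ and $\Wscr_i$.

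The key claim is that $G_s[0,T]$ is closed under products, via Leibniz and \eqref{eq:binom}, so $y\in G_s[0,T]$. Then Remark \ref{rm:summary} gives $w=\Wscr_1\Lscr_2y-\Wscr_2\Lscr_1y\in C^\infty([0,T];C^4[0,L])$. The corresponding $z$ is the unique solution of \eqref{eq:beam1}-\eqref{eq:beam4} for the initial state $z(0)$ and input $f(t)=w(L,t)$, which lies in $C^\infty[0,T]\subset C[0,T]$ and is compatible for $z(0)$ since $w(L,0)=f(0)$.

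It remains to identify the endpoints. Near $t=0$ we have $y=y^0$ on $[0,T/3]$. The derivatives of $\Lscr_iy$ at $0$ are built termwise from derivatives of $y$ at $0$, as shown in the proof of Proposition \ref{pr:psolution}, so $w$ and $w_t$ at $t=0$ coincide with those of $v^0$. Hence $z(0)=z_0$. Near $t=T$ we have $y(t)=y^T(T-t)$. Define $r(t)=y^T(T-t)$. Then $r^{(2k)}(t)=(y^T)^{(2k)}(T-t)$, so $[\Lscr_i r](t)=[\Lscr_i y^T](T-t)$ and $w(\cdot,T)=v^T(\cdot,0)$. For the velocity, however, $w_t(\cdot,T)=-v^T_t(\cdot,0)$. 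The sign flip is the main obstacle: reversal gives the state $[u\ \, -v\ \, -\alpha\ \, -\beta]$ rather than $z_T$.

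I will fix this by checking that $\Phi$ is closed under this reflection, and then applying the construction to the reflected endpoint. Concretely, if $z_T$ arises from $y^T$, I will show that $[u\ \, -v\ \, -\alpha\ \, -\beta]$ arises from $\hat y(t)=y^T(-t)$ when this is defined. Because $y^T$ is defined only on $[0,T]$, I will instead use $y(t)=\phi(t)y^0(t)+(1-\phi(t))\,\hat r(t)$, where $\hat r$ must satisfy the $2k$-jets $\hat r^{(j)}(T)=(y^T)^{(j)}(0)$ for all $j$. The natural choice is $\hat r(t)=y^T(t-T)$, which requires extending $y^T$ in $G_s$ to $[-T,0]$. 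A Gevrey Borel-type extension (for example, a Gevrey extension theorem for $s>1$) provides an element of $G_{s'}$ for a slightly larger $s'<2$ with the prescribed jet at $0$. I expect justifying this extension, together with the observation that only the jet at $T$ matters, to be the delicate step. The rest is bookkeeping via Proposition \ref{pr:wisclasic}. Uniqueness of the solution follows from Proposition \ref{pr:wellposed}.
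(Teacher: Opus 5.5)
Your argument is correct, and at the decisive step it differs from the paper's proof; in fact it repairs it.

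The paper glues exactly as in your first attempt: $y(t)=y_0(t)\psi(t)+y_T(T-t)\psi(T-t)$, where $\psi$ is a Gevrey step that is flat at both endpoints. It then asserts $y^{(n)}(T)=y_T^{(n)}(0)$ for all $n$. The chain rule gives instead $y^{(n)}(T)=(-1)^n y_T^{(n)}(0)$. The components $w_t(\cdot,T)$, $w_t(0,T)$ and $w_{xt}(0,T)$ are series in the odd derivatives $y^{(2n+2k+1)}(T)$ only. Hence the paper's construction reaches $[u_T\ \, -v_T\ \, -\alpha_T\ \, -\beta_T]$ rather than $z_T=[u_T\ \, v_T\ \, \alpha_T\ \, \beta_T]$. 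This is precisely the obstruction you found. As written, the paper's proof is therefore valid only for targets with vanishing velocity components, such as the steady states of Remark~\ref{rm:ss} or $z_T=0$ in the examples. It does not cover general elements of $\Phi$, such as the eigenvectors $\phi_n$ used in Section~\ref{sec5}.

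Your fix is a Gevrey version of Borel's lemma for $s>1$, possibly with an arbitrarily small loss of order $s<s'<2$. It gives $\theta\in G_{s'}$ with $\theta^{(j)}(0)=y_T^{(j)}(0)$ for every $j$, and you then take $\hat r(t)=\theta(t-T)$. This yields a $y\in G_{s'}[0,T]$ whose full jets at $0$ and at $T$ are those of $y_0$ and $y_T$ at $0$. Since $z(0)$ and $z(T)$ depend only on these jets, $z(0)=z_0$ and $z(T)=z_T$ follow.

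The trade-off is this. The paper's route is self-contained but only establishes the transfer up to the velocity reflection. Rescuing it would require showing that $\Phi$ is closed under that reflection, which again needs a Borel-type result. Your route covers all of $\Phi$ at the price of citing an external Borel-type theorem and enlarging $s$ slightly. That enlargement is harmless because $\Phi$ is a union over $s\in(1,2)$.

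When you write it up, tidy three things:
\begin{itemize}
\item Drop the detour through closure of $\Phi$ under reflection; your final construction uses no time reversal at all.
\item Note that $\theta$ need not extend $y_T$; any Gevrey function with the right jet at $0$ will do.
\item Replace the phrase ``$2k$-jets'' by matching for all $j$. The odd $j$ are exactly what fixes the velocity components.
\end{itemize}
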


\begin{pf}
From Remark \ref{rm:summary} it is evident that $\Phi$ is a well-defined and non-empty set, and by definition $\Phi$ is independent of $T$. Since $z_0,z_T\in \Phi$ by assumption and $G_{s_1}[0,T] \subset G_{s_2}[0,T]$ for $s_1\leq s_2$, it follows from the definition of $\Phi$ that there exist $y_0, y_T\in G_s[0,T]$ for some $s\in (1,2)$ such that by defining $v_0 = \Wscr_1 \Lscr_2 y_0 - \Wscr_2\Lscr_1 y_0$ and $v_T = \Wscr_1 \Lscr_2 y_T - \Wscr_2\Lscr_1 y_T$, we can express $z_0$ and $z_T$ as  \vspace{-1mm}
\begin{align}
&z_0 = [v_0(\cdot,0) \ v_{0,t}(\cdot,0) \ v_{0,t}(0,0) \ v_{0,tx}(0,0)], \label{eq:z0exp}\\[1ex]
&z_T = [v_T(\cdot,0) \ v_{T,t}(\cdot,0) \ v_{T,t}(0,0) \ v_{T,tx}(0,0)]. \label{eq:zTexp}
\end{align}
For $t\in[0,T]$ let \vspace{-1mm}
\begin{equation}\label{eq:psi}
 \psi(t)=1-\Big(\int_0^t \psi_0(\tau)\dd \tau\bigg / \int_0^T \psi_0(\tau)\dd \tau\Big), \vspace{-1mm}
\end{equation}
where $\psi_0(t)=\exp\left(-\left[\left(1-\frac{t}{T} \right)\frac{t}{T} \right]^{-\frac{1}{s-1}} \right)$ for $t\in (0,T)$ and $\psi_0(0)=\psi_0(T)=0$. Then $\psi \in G_s[0,T]$ and \vspace{-1mm}
\begin{equation}\label{eq:psival}
 \psi(0) = 1, \quad \psi(T)=0, \quad  \psi^{(k)}(0)=\psi^{(k)}(T)=0 \vspace{-1mm}
\end{equation}
for all $k\geq 1$, see \cite{MeScKu:2010}. For all $t\in [0,T]$ define \vspace{-1mm}
\begin{equation}\label{eq:pdef}
 y(t) = y_0(t)\psi(t)+y_T(T-t)\psi(T-t). \vspace{-1mm}
\end{equation}
Since $y_0, y_T,\psi \in G_s[0,T]$ and $G_s[0,T]$ is closed under addition and multiplication of functions \cite[Proposition 1.4.5]{Ro:1993} we get that $y\in G_s[0,T]$. Let $w = \Wscr_1 \Lscr_2 y -\Wscr_2 \Lscr_1 y$. Then $z(t) = [w(\cdot,t) \ w_t(\cdot,t) \  w_{xt}(\cdot,t) \ w_{xt}(\cdot,t)]\in C([0,T];Z)$ is the unique solution of \eqref{eq:beam1}-\eqref{eq:beam4} for the initial state $z(0)$ and input $f(t)=w(L,t)$, see Remark \ref{rm:summary}. We will now complete the proof of this theorem by establishing that $z(0)=z_0$ and $z(T)=z_T$.

The expression $w = \Wscr_1 \Lscr_2 y - \Wscr_2 \Lscr_1 y$ is the same as \eqref{eq:formalsoln} with $y_1 = \Lscr_2 y$ and $y_2 = - \Lscr_1 y$. The series for $w(\cdot,t)$, $w_t(\cdot,t)$, $w_t(0,t)$ and $w_{xt}(0,t)$ can be obtained by termwise differentiation of the series in \eqref{eq:formalsoln} (see the proof of Proposition \ref{pr:wisclasic}). Therefore $z(t)= [w(\cdot,t) \ w_t(\cdot,t)\ w_t(0,t) \ w_{xt}(0,t)]$ for $t\in[0,T]$ can be rewritten as \vspace{-1mm}
$$ z(t) = \sum_{k\geq0} A_k y_1^{(k)}(t) + \sum_{k\geq0} B_k y_2^{(k)}(t)\quad \forall t\in[0,T], \vspace{-2mm}$$
where $A_k, B_k\in C^4[0,L]\times C^4[0,L]\times\rline\times\rline$. Note that $y_1^{(k)}(t) = [\Lscr_2 y^{(k)}](t)$ and $y_2^{(k)}(t) = -[\Lscr_1 y^{(k)}](t)$ (see the proof of Proposition \ref{pr:psolution}). Using this and the definition of the operators $\Lscr_1$ and $\Lscr_2$ in the above equation it follows that for each $t\in[0,T]$, \vspace{-1mm}
\begin{equation} \label{eq:zdsum}
 \!z(t)\!=\!\sum_{k\geq0} \sum_{n\geq0} \bigl(A_k h_{n,x}(L)-B_k g_{n,x}(L)\bigr) y^{(n+k)}(t). \vspace{-2mm}
\end{equation}
Applying the above procedure used to derive \eqref{eq:zdsum} starting from $w = \Wscr_1 \Lscr_2 y - \Wscr_2 \Lscr_1 y$, to both $v_0 = \Wscr_1 \Lscr_2 y_0 - \Wscr_2 \Lscr_1 y_0$ and $v_T = \Wscr_1 \Lscr_2 y_T - \Wscr_2 \Lscr_1 y_T$ (instead of $w$) and then recalling \eqref{eq:z0exp}-\eqref{eq:zTexp} we get \vspace{-1mm}
$$z_0=\sum_{k\geq0}  \sum_{n\geq0} \bigl( A_k h_{n,x}(L) - B_k g_{n,x}(L)\bigr) y_0^{(n+k)}(0), \vspace{-2mm}$$
$$z_T=\sum_{k\geq0}  \sum_{n\geq0} \bigl( A_k h_{n,x}(L) - B_k g_{n,x}(L)\bigr) y_T^{(n+k)}(0).\vspace{1mm}$$
Differentiating \eqref{eq:pdef} $n$-times and then using \eqref{eq:psival} we get $y^{(n)}(0) = y_0^{(n)}(0)$ and $y^{(n)}(T) = y_T^{(n)}(0)$ for all $n\geq 0$. It now follows from the expressions for $z(0)$ and $z(T)$ from \eqref{eq:zdsum} and the expressions for $z_0$ and $z_T$ given above that $z(0)=z_0$ and $z(T)=z_T$. This completes the proof. \vspace{-2mm} \hfill $\blacksquare$
\end{pf}

When $z_0$ and $z_T$ in Theorem \ref{th:main_result} are real-valued, it is easy to see that the control input $f$ proposed in the proof of the theorem and the corresponding state trajectory of the beam model \eqref{eq:beam1}-\eqref{eq:beam4} are both real-valued. Indeed, the results in Sections \ref{sec2}, \ref{sec3} and \ref{sec4} will remain the same if we take the state-space $Z$ to be real-valued, which is physically more meaningful. However, we have taken $Z$ to be a complex-valued Hilbert space since the eigenfunctions of the beam model are complex-valued and we require them to belong to the state-space, see Section \ref{sec5}. The next remark is about the real-valued steady-states of the beam model.

\begin{remark} \label{rm:ss}
Let $c\in\rline$. For the initial state $[u_{ss} \ 0 \ 0 \ 0]\in Z$ with $u_{ss}(x)=c$ for all $x\in[0,L]$ and the constant input $f_{ss}\in C^\infty([0,T];\rline)$ with $f_{ss}(t)=c$ for all $t\in[0,T]$, note that the constant function $z_{ss}(t)=[u_{ss} \ 0 \ 0 \ 0]$ for $t\in [0,T]$ is the solution of \eqref{eq:beam1}-\eqref{eq:beam4}. This follows from Remark \ref{rm:classical}, with $w(\cdot,t)=u_{ss}$ for $t\in[0,T]$ and $f=f_{ss}$. We call $z_{ss}(0)$ the steady-state of \eqref{eq:beam1}-\eqref{eq:beam4} corresponding to the constant input $f_{ss}$. Each such steady-state is a rest configuration of the beam corresponding to some fixed position of the cantilever joint. \vspace{-1mm}

Let $v = \Wscr_1 \Lscr_2 y - \Wscr_2 \Lscr_1 y$, where $y$ is the constant function given as $y(t)=c$ for all $t\in[0,T]$ and some $c\in\rline$. Using the definitions of the operators $\Lscr_1$, $\Lscr_2$, $\Wscr_1$ and $\Wscr_2$ and the function $y$, we get \vspace{-1mm}
\begin{equation} \label{eq:vsteadystate}
  v(x,t) = g_0(x) h_{0,x}(L) y(t) - h_0(x) g_{0,x}(L) y(t) \vspace{-1mm}
\end{equation}
for all $x\in[0,L]$ and $t\in[0,T]$. Since $g_0(x)=1$ and $h_0(x)=x$, see \eqref{eq:g0} and \eqref{eq:h0}, and $y$ is a constant function, it follows from \eqref{eq:vsteadystate} that the initial state \vspace{-1mm}
$$ [v(\cdot,0) \ \ v_t(\cdot,0) \ \ v_t(0,0) \ \ v_{xt}(0,0)]=[u_{ss} \ \ 0 \ \ 0 \ \ 0], \vspace{-1mm}$$
where $u_{ss}(x)=c$ for all $x\in [0,L]$. By definition, the above state is in $\Phi$ and is also a steady-state. In other words, steady-states of \eqref{eq:beam1}-\eqref{eq:beam4} belong to the set $\Phi$. It now follows from Theorem \ref{th:main_result} that we can find an input $f$ to transfer  \eqref{eq:beam1}-\eqref{eq:beam4} between any two steady-states. \hfill$\square$
\end{remark}

\section{Approximate Controllability} \label{sec5}  

Given initial and final states belonging to a certain subspace $\Phi$ of the state-space $Z$ and a $T>0$, we have shown in Theorem \ref{th:main_result} that there exists a control input $f$ which transfers the beam model \eqref{eq:beam1}-\eqref{eq:beam4} between these two states over the time interval $[0,T]$. In this section we prove that the set $\Phi$ is dense in $Z$, see Proposition \ref{pr:Phidense}. Combining this result with the well-posedness result in Proposition \ref{pr:wellposed}, we then conclude in Theorem \ref{th:approx} that \eqref{eq:beam1}-\eqref{eq:beam4} is approximately controllable over all time intervals $[0,T]$.

Recall the Hilbert space $\tilde{Z}$ introduced below \eqref{eq:wbeam4}. Consider the following state operator ${\Ascr}:\tilde Z\to \tilde Z$ associated with the beam model \eqref{eq:beam1}-\eqref{eq:beam4}: The domain of $\Ascr$ is $\Dscr({\Ascr}) = \{[u \ v \ \alpha \ \beta] \in H^4(0,L)\times H^2(0,L) \times \rline \times \rline \m\big|\m  u(L)=u_x(L)=v(L)=v_x(L)=0,\, v(0) = \alpha, \, v_x(0)=\beta\}$ and $ \Ascr \bigl[u \ v \ \alpha  \ \beta\bigr] = [v \ -\!\frac{(EI u_{xx})_{xx}}{\rho} \ -\!\frac{(EI u_{xx})_x(0)}{m} \  \frac{EI(0)u_{xx}(0)}{J} ]$
for all $[u \ v \ \alpha \ \beta]\in \Dscr(\Ascr)$.

The spectral result given below for the state operator $\Ascr$ follows from \cite{Guo:2002}, which considers a beam model similar to \eqref{eq:beam1}-\eqref{eq:beam4}. (In \cite{Guo:2002} the cantilever joint is at $x=0$, the tip-mass is at $x=1$ and the inputs are different.)

\begin{proposition}\label{pr:eigenprop}
  The spectrum of $\Ascr$ consists of countably many eigenvalues $\{\lambda_n\}_{n \in \nline}$ which lie on the imaginary axis. The corresponding eigenfunctions $\{\phi_n\}_{n \in \nline}$ of $\Ascr$, scaled appropriately, form a Riesz basis for the Hilbert space $\tilde Z$ and are of the form \vspace{-1mm}
  \begin{equation}\label{eq:egn-fnc-rep}
    \phi_n = [\,u_n \ \  \lambda_n u_n \ \  \lambda_n u_n(0) \ \ \lambda_n u_{n,x}(0)\,], \vspace{-1mm}
  \end{equation}
  where $u_n\in C^4[0,L]$ is a non-zero solution of the following boundary value ODE: \vspace{-1mm}
  \begin{align}
    &\lambda_n^2 \rho(x) u(x) + \left(EI u_{xx}\right)_{xx}(x) = 0, \label{eq:eigenode1}\\
    &m\lambda_n^2 u(0) + \left(EI u_{xx}\right)_{x}(0) = 0, \label{eq:eigenode2} \\
    &J \lambda_n^2 u_{x}(0) - EI(0) u_{xx}(0)=0, \label{eq:eigenode3}\\
    &u(L) = 0, \qquad u_{x}(L) = 0. \label{eq:eigenode4} \\[-5ex] \nonumber
  \end{align}
\end{proposition}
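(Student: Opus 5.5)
The plan is to establish the four claims in order: $\Ascr$ is skew-adjoint with compact resolvent, so its spectrum is a sequence of imaginary eigenvalues; the eigenvectors have the form \eqref{eq:egn-fnc-rep}; they solve \eqref{eq:eigenode1}-\eqref{eq:eigenode4}; and, after normalization, they form a Riesz basis of $\tilde Z$, which is where \cite{Guo:2002} will be invoked.

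\textbf{Skew-adjointness.} For $z=[u\ v\ \alpha\ \beta]\in\Dscr(\Ascr)$, integrate $\int_0^L (EI u_{xx})_{xx}\overline v\,\dd x$ by parts twice. The conditions $v(L)=v_x(L)=0$ kill the boundary terms at $x=L$, and $v(0)=\alpha$, $v_x(0)=\beta$ produce terms at $x=0$ that cancel exactly with the $m$- and $J$-components of the inner product \eqref{eq:tZinner}. This gives $\mathrm{Re}\langle \Ascr z,z\rangle_{\tilde Z}=0$, and more generally $\langle \Ascr z_1,z_2\rangle_{\tilde Z}=-\langle z_1,\Ascr z_2\rangle_{\tilde Z}$.

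\textbf{Bijectivity and compact resolvent.} Next I show $0\in\rho(\Ascr)$. Given $[f_1\ f_2\ a\ b]\in\tilde Z$, the equation $\Ascr z=[f_1\ f_2\ a\ b]$ forces $v=f_1$. For $u$ it becomes the fourth order ODE $(EI u_{xx})_{xx}=-\rho f_2$ with boundary data $(EIu_{xx})_x(0)=-ma$, $EI(0)u_{xx}(0)=Jb$, $u(L)=u_x(L)=0$. This problem is uniquely solvable by four direct integrations, exactly as in the proof of Proposition \ref{pr:genfnest}. The solution map sends $\tilde Z$ into $H^4\times H^2\times\cline\times\cline$, which embeds compactly in $\tilde Z$, so $\Ascr^{-1}$ is compact. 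Together with skew-adjointness (the dense range of $\Ascr\pm I$ follows from Lumer–Phillips type arguments, or directly since $\Ascr$ is invertible and skew-symmetric), this shows that $\Ascr$ is skew-adjoint with compact resolvent. Hence its spectrum consists of countably many eigenvalues $\lambda_n\in i\rline$, $|\lambda_n|\to\infty$, and $0$ is not among them.

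\textbf{Eigenvectors and the ODE.} Suppose $\Ascr\phi=\lambda\phi$ with $\phi=[u\ v\ \alpha\ \beta]$. The first component gives $v=\lambda u$, and the domain conditions give $\alpha=\lambda u(0)$ and $\beta=\lambda u_x(0)$, which is the form \eqref{eq:egn-fnc-rep}. Substituting into the remaining three components yields \eqref{eq:eigenode1}-\eqref{eq:eigenode3}, while $u(L)=u_x(L)=0$ is \eqref{eq:eigenode4}. Since $\rho,EI\in C^4$ and $u\in H^4$, the ODE \eqref{eq:eigenode1} bootstraps $u$ to $C^4[0,L]$. Each eigenvalue is geometrically simple: the solution space of \eqref{eq:eigenode1}-\eqref{eq:eigenode3} is two dimensional, and imposing \eqref{eq:eigenode4} cuts it down to at most one dimension, since otherwise a nonzero combination would vanish to high order and contradict uniqueness.

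\textbf{Riesz basis (main obstacle).} A skew-adjoint operator with compact resolvent already has a complete orthonormal set of eigenvectors. So the Riesz basis property with respect to $\tilde Z$ (indeed an orthonormal basis after scaling) follows from the spectral theorem once skew-adjointness is fully established. The one point needing care is the range condition, i.e.\ maximality, which I handle through the invertibility step above. If one prefers to align with \cite{Guo:2002}, the asymptotic analysis there (WKB-type expansions of $u_n$ after a Liouville transformation to handle the variable $\rho,EI$, with the tip-mass at the other end) gives separated eigenvalues and a Riesz basis directly; the change of variable $x\mapsto L-x$ maps our setting onto that model. I expect this identification with \cite{Guo:2002}, especially checking that the variable coefficient case is covered, to be the hardest part, and I would fall back on the self-contained spectral-theorem argument.
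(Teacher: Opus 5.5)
Your proof is correct, but it takes a different route from the paper. The paper gives no argument of its own: it imports the whole spectral statement from \cite{Guo:2002}, a variable-coefficient model with the two ends interchanged, just as Appendix A imports the bounded invertibility of $\Ascr$ from Lemma 2.1 there. You derive everything directly:
\begin{itemize}
\item integration by parts in the energy inner product \eqref{eq:tZinner} gives skew-symmetry of $\Ascr$;
\item solving $\Ascr z=[f_1\ f_2\ a\ b]$ explicitly gives $0\in\rho(\Ascr)$ with $\Ascr^{-1}$ compact (two integrations from $x=0$ determine $EIu_{xx}$, two from $x=L$ determine $u$);
\item hence $\Ascr^{-1}$ is a compact, injective, skew-adjoint operator, and the spectral theorem for compact normal operators yields an orthonormal eigenbasis of $\tilde Z$ with purely imaginary eigenvalues $\lambda_n$, $|\lambda_n|\to\infty$.
\end{itemize}
Phrased through $\Ascr^{-1}$, you also get density of $\Dscr(\Ascr)=\operatorname{Ran}\Ascr^{-1}$ for free, so the vague appeal to Lumer--Phillips can be dropped.

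Your route buys a self-contained argument and a slightly stronger conclusion: the basis is orthonormal, not merely Riesz, in the $\tilde Z$ norm, which is all that Section \ref{sec5} uses. The citation buys brevity. Guo's asymptotic analysis carries extra information (eigenvalue asymptotics and gaps) that matters for exact controllability but is not needed here. So your worry about checking the variable-coefficient case in the WKB analysis is moot.

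One sentence should be deleted or repaired: your justification of geometric simplicity is not valid. If an eigenspace were two-dimensional, a nonzero combination could be made to satisfy only one extra condition, e.g.\ $u(L)=u_x(L)=u_{xx}(L)=0$, or $u(0)=(EIu_{xx})_x(0)=0$ at the other end. Three vanishing Cauchy data do not contradict uniqueness for a fourth-order ODE. Simplicity is not part of the statement, and your basis argument does not need it, since the spectral theorem handles finite multiplicities. If you want simplicity anyway, use the argument in the proof of Proposition \ref{pr:Phidense}. There $g_{j,x}(L)\lambda_n^{2j}>0$ for $j\geq1$, so \eqref{eq:uxL-series} forces $u_{n,x}(0)\neq0$, and any two eigenfunctions for $\lambda_n$ are therefore proportional.
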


Recall the generating functions $g_k$ and $h_k$ from Section \ref{sec3}.
In the next proposition, we prove that the eigenfunctions of $\Ascr$ can be expressed in terms of $g_k$ and $h_k$ and that they belong to the set $\Phi$, and finally conclude that the set $\Phi$ is dense in the state-space $Z$.

\begin{proposition} \label{pr:Phidense}
Consider an eigenvalue $\lambda_n$ of $\Ascr$, its corresponding eigenvector $\phi_n$ and the solution $u_n$ to the boundary value ODE \eqref{eq:eigenode1}-\eqref{eq:eigenode4} which determines $\phi_n$ via \eqref{eq:egn-fnc-rep}. Then $u_n$ can be expressed as \vspace{-1mm}
\begin{equation}\label{eq:un_exp}
 \!\!u_n(x) \!=\! u_n(0)\! \sum_{k \geq 0}\! g_k(x) \lambda_n^{2k} + u_{n,x}(0)\! \sum_{k \geq 0}\! h_k(x) \lambda_n^{2k} \vspace{-1.5mm}
\end{equation}
for all $x\in[0,L]$. Recall $g_0$ from \eqref{eq:g0} and define the vector $\phi_0 = [g_0 \ 0 \ 0 \ 0]$ in $Z$. Then the sequence of functions $\{\phi_n\}_{n \in\{0,\nline\}}$ belongs to the set $\Phi$ and forms a Riesz basis for the state-space $Z$. In other words, $\Phi$ is dense in $Z$. \vspace{-3mm}
\end{proposition}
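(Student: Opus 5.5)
The plan has three parts. First, verify the series representation \eqref{eq:un_exp} through uniqueness of a fourth-order initial value problem. Second, realize each $\phi_n$ as an element of $\Phi$ by choosing an exponential flat output $y$. Third, handle $\phi_0$ and transfer the Riesz basis property from $\tilde Z$ to $Z$.

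For \eqref{eq:un_exp}, fix $n$ and write $\lambda=\lambda_n$, $a=u_n(0)$, $b=u_{n,x}(0)$. Let $U(x)=a\sum_{k\ge0}g_k(x)\lambda^{2k}+b\sum_{k\ge0}h_k(x)\lambda^{2k}$.
\begin{itemize}
\item \emph{Convergence.} By the bounds on $g_k,h_k$ and their first four derivatives from the proofs of Propositions \ref{pr:genfnest} and \ref{pr:wisclasic} (of order $C^k/(4k-6)!$), the series and its termwise $x$-derivatives up to order four converge uniformly on $[0,L]$. So $U\in C^4[0,L]$ and may be differentiated termwise.
\item \emph{ODE.} Using \eqref{eq:g0ODE}, \eqref{eq:g1ODE}, \eqref{eq:gkODE} and \eqref{eq:h0ODE}, \eqref{eq:h1ODE}, \eqref{eq:hkODE}, the series telescopes to give $(EIU_{xx})_{xx}+\lambda^2\rho U=0$.
\item \emph{Initial data.} The initial conditions of the generating functions give $U(0)=a$, $U_x(0)=b$, $EI(0)U_{xx}(0)=J\lambda^2 b$ and $(EIU_{xx})_x(0)=-m\lambda^2 a$.
\end{itemize}
By \eqref{eq:eigenode2}--\eqref{eq:eigenode3}, $u_n$ has exactly the same four values at $x=0$ and solves the same linear ODE \eqref{eq:eigenode1}. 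Since $EI>0$, uniqueness for this regular fourth-order initial value problem gives $U=u_n$.

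To show $\phi_n\in\Phi$, take $y(t)=c\,e^{\lambda t}$. Because $\lambda$ is a fixed complex number, $|y^{(k)}|\le |c|\max(1,e^{|\lambda|T})|\lambda|^k$, so $y\in G_s[0,T]$ for every $s\in(1,2)$. Put $G(\lambda)=\sum_k g_{k,x}(L)\lambda^{2k}$ and $H(\lambda)=\sum_k h_{k,x}(L)\lambda^{2k}$. Then $\Lscr_2y=cH(\lambda)e^{\lambda t}$ and $\Lscr_1 y=cG(\lambda)e^{\lambda t}$, and therefore
$$v=\Wscr_1\Lscr_2y-\Wscr_2\Lscr_1y=c\,e^{\lambda t}\Big(H(\lambda)\sum_k g_k\lambda^{2k}-G(\lambda)\sum_k h_k\lambda^{2k}\Big).$$
Differentiating \eqref{eq:un_exp} at $x=L$ and using $u_{n,x}(L)=0$ gives $aG(\lambda)+bH(\lambda)=0$. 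Moreover $(a,b)\neq0$, since otherwise $u_n\equiv0$ by the uniqueness argument above. Hence, provided $(H(\lambda),G(\lambda))\neq(0,0)$, we get $(a,b)=c\,(H,-G)$ for a suitable $c$, so $v(\cdot,t)=e^{\lambda t}u_n$. Evaluating $v,v_t,v_t(0,\cdot),v_{xt}(0,\cdot)$ at $t=0$ reproduces \eqref{eq:egn-fnc-rep}, so $\phi_n\in\Phi$.

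The main obstacle is the degenerate case $H(\lambda_n)=G(\lambda_n)=0$. There, both fundamental solutions satisfy $u_x(L)=0$, and the exponential ansatz produces $v\equiv0$. I would first try to rule this out using the remaining condition $u_n(L)=0$ together with the simplicity of the eigenvalues known from \cite{Guo:2002}. Failing that, I would use $y=t\,e^{\lambda t}$ or a $\lambda$-derivative of the family, whose leading term at $t=0$ should recover $u_n$.

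For $\phi_0$, take $c=1$ in Remark \ref{rm:ss}. Since $h_{0,x}(L)=1$ and $g_{0,x}(L)=0$, this gives $[g_0\ 0\ 0\ 0]\in\Phi$. For the Riesz basis, every $z=[u\ v\ \alpha\ \beta]\in Z$ decomposes as $z=(z-u(L)\phi_0)+u(L)\phi_0$ with $z-u(L)\phi_0\in\tilde Z$. The map $z\mapsto u(L)$ is bounded on $Z$ by the Sobolev embedding, and $\phi_0\notin\tilde Z$, so $Z=\tilde Z\oplus\mathrm{span}\{\phi_0\}$ is a topological direct sum. The norms of $Z$ and $\tilde Z$ are equivalent on $\tilde Z$, as noted in the proof of Proposition \ref{pr:wellposed}. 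Adjoining $\phi_0$ to the Riesz basis $\{\phi_n\}$ of $\tilde Z$ from Proposition \ref{pr:eigenprop} therefore yields a Riesz basis of $Z$. Finally, $\Phi$ is a linear set, because $y\mapsto v$ is linear and $G_s$ spaces are nested. It therefore contains the span of this basis, and hence $\Phi$ is dense in $Z$.
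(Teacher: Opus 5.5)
Your route is the paper's: the series representation via uniqueness for the fourth-order initial value problem at $x=0$, the exponential flat output $y=c\,e^{\lambda_n t}$, and adjoining $\phi_0$ to the Riesz basis of $\tilde Z$. The gap is the step you flag yourself. Showing $\phi_n\in\Phi$ requires $(G(\lambda_n),H(\lambda_n))\neq(0,0)$, you leave this unproved, and in the degenerate case your construction gives $v\equiv0$.

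Neither of your fallbacks closes it as stated. Write $U_g=\sum_k g_k\lambda_n^{2k}$ and $U_h=\sum_k h_k\lambda_n^{2k}$. Simplicity of $\lambda_n$ does not exclude $G=H=0$: the condition $u(L)=0$ would still cut out a one-dimensional eigenspace $\{aU_g+bU_h \mid aU_g(L)+bU_h(L)=0\}$, consistent with a simple eigenvalue. The choice $y=t\,e^{\lambda_n t}$ gives $v=e^{\lambda_n t}\bigl(H'(\lambda_n)U_g-G'(\lambda_n)U_h\bigr)$, with primes denoting $\lambda$-derivatives. This need not vanish at $x=L$, so it need not be a multiple of $u_n$.

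The missing idea is a sign argument, which in fact gives $G(\lambda_n)>0$. From \eqref{eq:g1sol}, \eqref{eq:g1xsol}, \eqref{eq:gksol} and \eqref{eq:gkxsol}, positivity of $\rho$, $EI$ and $m$ gives by induction that $(-1)^k g_k>0$ and $(-1)^k g_{k,x}>0$ on $(0,L]$ for $k\geq1$, while $g_{0,x}\equiv0$. Write $\lambda_n=i\omega_n$ with $\omega_n\in\rline\setminus\{0\}$; here $0$ is not an eigenvalue because $\Ascr$ is boundedly invertible. Then $\lambda_n^{2j}=(-1)^j\omega_n^{2j}$, so every term $g_{j,x}(L)\lambda_n^{2j}$ with $j\geq1$ is strictly positive. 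Hence $G(\lambda_n)>0$.

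This forces $u_{n,x}(0)\neq0$ and gives $u_n(0)=\alpha H(\lambda_n)$ with $\alpha=-u_{n,x}(0)/G(\lambda_n)$, which is your proportionality with $c=\alpha$. With this inserted, the rest of your argument goes through. Your explicit topological direct sum $Z=\tilde Z\oplus\mathrm{span}\{\phi_0\}$ and your remark that $\Phi$ is a linear subspace usefully spell out what the paper compresses into a citation of \cite[Proposition 2.5.5]{TuWe:2009} and its final density sentence.
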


\begin{pf}
We denote the series on the right-side of \eqref{eq:un_exp} by $\hat u_n$, i.e. \vspace{-1.5mm}
\begin{equation}\label{eq:uhat_exp}
 \!\! \hat u_n(x) \!=\! u_n(0)\! \sum_{k \geq 0}\! g_k(x) \lambda_n^{2k} + u_{n,x}(0)\! \sum_{k \geq 0}\! h_k(x) \lambda_n^{2k} \vspace{-2mm}
\end{equation}
for each $x\in[0,L]$. From the proof of Proposition \ref{pr:wisclasic}, we have that there exists a constant $C>0$ independent of $k$ such that the functions $g_k$, $g_{k,x}$, $g_{k,xx}$, $g_{k,xxx}$, $g_{k,xxxx}$,  $h_k$, $h_{k,x}$, $h_{k,xx}$, $h_{k,xxx}$ and $h_{k,xxxx}$ are uniformly bounded on $[0,L]$ by $C^k/(4k-6)!$. Using this we can bound the terms $g_k(x)\lambda_n^{2k}$ and $h_k(x)\lambda_n^{2k}$ in the series in \eqref{eq:uhat_exp} uniformly on $[0,L]$ by $D_k = C^k |\lambda_n|^{2k}/(4k-6)!$. Since $\sum_{k\geq 0} D_k <\infty$, we can appeal to the Weierstrass M-test to conclude that the series in \eqref{eq:uhat_exp} converges uniformly on $[0,L]$ to a continuous function $\hat u_n$. We can similarly conclude that every series obtained by termwise differentiation (up to four times) of the series for $\hat u_n$ in \eqref{eq:uhat_exp}, converges uniformly on $[0,L]$ to a continuous function. In summary, the function $\hat u_n$ in \eqref{eq:uhat_exp} belongs to $C^4[0,L]$ and its derivatives can be obtained by termwise differentiation of the series in \eqref{eq:uhat_exp}.

We will now show that $\hat u_n = u_n$.
Using \eqref{eq:g0ODE}, \eqref{eq:g1ODE}, \eqref{eq:gkODE} and \eqref{eq:h0ODE}, \eqref{eq:h1ODE}, \eqref{eq:hkODE} it is easy to see that the series corresponding to $\lambda_n^2 \rho(x) \hat u_n(x)$ and $-(EI \hat u_{n,xx})_{xx}(x)$ are the same for each $x\in[0,L]$, and hence $\hat u_n$ satisfies \eqref{eq:eigenode1}. Using the series for $\hat u_n(0)$, $\hat u_{n,x}(0)$, $EI \hat u_{n,xx}(0)$ and $(EI \hat u_{n,xx})_x(0)$ and the initial conditions for $g_k$ in \eqref{eq:g0IC1}, \eqref{eq:g0IC2}, \eqref{eq:g1IC1}, \eqref{eq:g1IC2}, \eqref{eq:gkIC1}, \eqref{eq:gkIC2} and the initial conditions for $h_k$ in \eqref{eq:h0IC1}, \eqref{eq:h0IC2}, \eqref{eq:h1IC1}, \eqref{eq:h1IC2}, \eqref{eq:hkIC1}, \eqref{eq:hkIC2}, it is straightforward to verify $\hat u_n$ satisfies the boundary conditions \eqref{eq:eigenode2}-\eqref{eq:eigenode3}. Finally, from the series for $\hat u_n(0)$ and $\hat u_{n,x}(0)$, it follows using \eqref{eq:g0IC1}, \eqref{eq:g1IC1}, \eqref{eq:gkIC1} and \eqref{eq:h0IC1}, \eqref{eq:h1IC1}, \eqref{eq:hkIC1} that \vspace{-1mm}
\begin{equation} \label{eq:ut=u}
  \hat{u}_n(0) = u_n(0), \qquad \hat{u}_{n,x}(0) = u_{n,x}(0). \vspace{-1mm}
\end{equation}
Let $\tilde{u}_n = u_n - \hat{u}_n$. Since $u_n$ and $\hat u_n$ satisfy \eqref{eq:eigenode1}-\eqref{eq:eigenode3} and \eqref{eq:ut=u} holds, it follows that $\tilde u_n$ satisfies \vspace{-1mm}
\begin{align*}
&\lambda_n^2 \rho(x) \tilde{u}_n(x) + \left(EI \tilde{u}_{n,xx}\right)_{xx}(x) = 0 \quad \ \ \forall x \in (0,L), \\
&\tilde{u}_n(0) = 0, \quad \tilde{u}_{n,x}(0) = 0,\\
&\tilde{u}_{n,xx}(0) = 0, \quad (EI \tilde{u}_{n,xx})_x(0) = 0. \\[-4ex]
\end{align*}
Clearly $\tilde u_n=0$ is the only solution to the above ODE and so $u_n = \hat u_n$, i.e. \eqref{eq:un_exp} holds.

Next we will prove that the sequence of eigenfunctions $\{\phi_n\}_{n \in \nline}$ of the operator $\Ascr$ is contained in the set $\Phi$. Consider an eigenvalue $\lambda_n$ and the corresponding eigenfunction $\phi_n = [u_n \ \lambda_n u_n \ \lambda_n u_n(0) \ \lambda_n u_{n,x}(0)]$. Since $u_n$ satisfies \eqref{eq:eigenode4}, we have $u_{n,x}(L) = 0$.  Substituting the series for $u_n$ from \eqref{eq:un_exp} in this equation we get \vspace{-1mm}
\begin{equation}\label{eq:uxL-series}
  u_n(0) \sum_{j \geq 0} g_{j,x}(L) \lambda_n^{2j} + u_{n,x}(0) \sum_{j \geq 0} h_{j,x}(L) \lambda_n^{2j} = 0. \vspace{-2mm}
\end{equation}
First we claim that $\sum_{j \geq 0} g_{j,x}(L) \lambda_n^{2j}>0$. Indeed, it follows from \eqref{eq:g0}, \eqref{eq:g1xsol} and \eqref{eq:gkxsol} that $g_{0,x}(L) = 0$ and $\operatorname{sgn} g_{j,x}(L) = (-1)^j$ for $j \geq 1$. Since $\lambda_n$ lies on the imaginary axis, see Proposition \ref{pr:eigenprop}, $\operatorname{sgn} \lambda_n^{2j} = (-1)^j$ for $j \geq 1$. So each term in $\sum_{j \geq 0} g_{j,x}(L) \lambda^{2j}$ is positive (except the first term which is zero) and our claim follows. Next we claim that $u_{n,x}(0) \neq 0$. Indeed, it follows from our first claim and \eqref{eq:uxL-series} that if $u_{n,x}(0)=0$, then $u_n(0)=0$ which, via \eqref{eq:un_exp}, leads to the contradiction that the function $u_n$ is zero. So $u_{n,x}(0) \neq 0$. Solving \eqref{eq:uxL-series} for $u_n(0)$, we get the expression $u_n(0) = \alpha \sum_{j \geq 0} h_{j,x}(L) \lambda_n^{2j}$, where $\alpha = -u_{n,x}(0)/\sum_{j \geq 0} g_{j,x}(L) \lambda_n^{2j}$. Our claims above imply that $\alpha$ is well-defined and non-zero. Substituting this expression for $u_n(0)$ in \eqref{eq:un_exp} and then simplifying we get \vspace{-2mm}
\begin{align}
  {u_n}(x) &= \alpha \Big(\sum_{k \geq 0} g_k(x) \sum_{j \geq 0} h_{j,x}(L) \lambda_n^{2j+2k} \nonumber\\
 &\hspace{10mm} - \sum_{k \geq 0} h_k(x) \sum_{j \geq 0} g_{j,x}(L) \lambda_n^{2j+2k} \Big) \label{eq:un_long_exp} \\[-4.5ex] \nonumber
\end{align}
for $x\in[0,L]$. Recall $\Lscr_1$ and $\Lscr_2$ from Proposition \ref{pr:psolution} and $\Wscr_1$ and $\Wscr_2$ from Remark \ref{rm:ss}.
Multiplying both sides of \eqref{eq:un_long_exp} by $e^{\lambda_n t}$ and then using the definitions of $\Lscr_1$, $\Lscr_2$, $\Wscr_1$, $\Wscr_2$ to rewrite the resulting equation  we get \vspace{-1mm}
\begin{equation} \label{eq:untov}
 u_n(x) e^{\lambda_n t} = \alpha (\Wscr_1 \Lscr_2 e^{\lambda_n t} - \Wscr_2 \Lscr_1 e^{\lambda_n t}) \vspace{-1mm}
\end{equation}
for all $x \in [0,L]$ and $t \in [0,T]$. Let $v(x,t) = u_n(x) e^{\lambda_n t}$ and $y(t) = \alpha e^{\lambda_n t}$ for $x \in [0,L]$ and $t\in[0,T]$. Then $y \in G_s[0,T]$ for all $s \in(1,2)$ and \eqref{eq:untov} can be rewritten as
$$v=\Wscr_1 \Lscr_2 y - \Wscr_2 \Lscr_1 y.$$
It now follows from the definition of the set $\Phi$ that the vector $[v(\cdot,0) \ v_t(\cdot,0) \ v_t(0,0) \ v_{xt}(0,0)]$, which is exactly the vector $\phi_n= [u_n \ \lambda_n u_n \ \lambda_n u_n(0) \ \lambda_n u_{n,x}(0)]$ (use the definition of $v$), belongs to $\Phi$. This completes the proof of our assertion that $\Phi$ contains $\{\phi_n\}_{n \in \nline}$. \vspace{-1mm}

We will now complete the proof of this theorem by showing that $\Phi$ is dense in $Z$. Note that the vector $\phi_0=[g_0 \ 0 \ 0 \ 0]$ is a steady state for the beam model \eqref{eq:beam1}-\eqref{eq:beam4} and so it belongs to $\Phi$, see Remark \ref{rm:ss}. From the definitions of $\phi_0$ and $\tilde{Z}$, clearly $\phi_0 \notin \tilde{Z}$. Since $\{\phi_n\}_{n\in \nline}$ forms a Riesz basis for $\tilde{Z}$, see Proposition \ref{pr:eigenprop}, and $Z=\tilde{Z} + \{\beta \phi_0\ | \beta \in \cline \}$, it follows from \cite[Proposition 2.5.5]{TuWe:2009} that the sequence of functions $\{\phi_n\}_{n \in \{0 \cup \nline\}}$ forms a Riesz basis for the state-space $Z$. Finally, since $\Phi$ contains $\{\phi_n\}_{n \in \{0 \cup \nline\}}$, the density of $\Phi$ in $Z$ follows. \vspace{-2mm}$\hfill \blacksquare$
\end{pf}

We now present our approximate controllability result for the beam model
\eqref{eq:beam1}-\eqref{eq:beam4}.

\begin{theorem}\label{th:approx}
  The beam model \eqref{eq:beam1}-\eqref{eq:beam4} is approximately controllable over all time intervals, i.e. given a time interval $[0,T]$, a constant $\epsilon>0$ and states $z_0$ and $z_T$ in $Z$, there exists an $f\in C^2[0,T]$ compatible for $z_0$ such that the unique weak solution $z\in C([0,T];Z)$ of \eqref{eq:beam1}-\eqref{eq:beam4} for the initial state $z_0$ and control input $f$ satisfies \vspace{-1mm}
\begin{equation}\label{eq:ac_est}
 \|z(T) - z_T\|_{Z} \leq \epsilon. \vspace{-2mm}
\end{equation}
\end{theorem}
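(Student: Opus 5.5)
The plan is to combine the density of $\Phi$ (Proposition \ref{pr:Phidense}), the exact transfer between states of $\Phi$ (Theorem \ref{th:main_result}), and the continuity estimate \eqref{eq:est_wellposed} of Proposition \ref{pr:wellposed}, using linearity of the model. The only delicate point is compatibility: the inputs must satisfy $f(0)=u_0(L)$. An approximation of $z_0$ by an element of $\Phi$ need not have the same value at $x=L$, so this must be repaired.

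\emph{Step 1.} Fix $T>0$, $\epsilon>0$ and $z_0,z_T\in Z$. By Proposition \ref{pr:Phidense} choose $\hat z_0,\hat z_T\in\Phi$ with $\|z_0-\hat z_0\|_Z\le\delta$ and $\|z_T-\hat z_T\|_Z\le\delta$, where $\delta>0$ will be fixed later. Theorem \ref{th:main_result} gives an input $\hat f$ and a solution $\hat z$ with $\hat z(0)=\hat z_0$ and $\hat z(T)=\hat z_T$. By construction $\hat f(t)=w(L,t)$ with $w$ as in Remark \ref{rm:summary}, so $\hat f\in C^\infty[0,T]\subset C^2[0,T]$ and $\hat f$ is compatible for $\hat z_0$.

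\emph{Step 2: fixing compatibility.} Write $e_0=z_0-\hat z_0$ and $e_0=[u^e\ v^e\ \alpha^e\ \beta^e]$. Since $u^e\in H^2(0,L)$, the embedding $H^2(0,L)\hookrightarrow C[0,L]$ gives $|u^e(L)|\le c\|e_0\|_Z\le c\delta$. Define $f=\hat f+f_e$, where $f_e(t)=u^e(L)\chi(t)$ and $\chi\in C^\infty[0,T]$ is a fixed function with $\chi(0)=1$ and $\chi(T)=0$. Then $f\in C^2[0,T]$ and $f(0)=\hat u_0(L)+u^e(L)=u_0(L)$, so $f$ is compatible for $z_0$. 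Moreover $f_e$ is compatible for $e_0$, and $\|f_e\|_{C^2[0,T]}\le c\|\chi\|_{C^2}\,\delta$.

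\emph{Step 3: superposition and the estimate.} Let $z_e$ be the weak solution with initial state $e_0$ and input $f_e$; it exists by Proposition \ref{pr:wellposed}. The weak formulation \eqref{eq:weaksolution} is linear in $(z_0,f,w)$, so $\hat z+z_e$ is a weak solution for the initial state $z_0$ and input $f$. By uniqueness it equals the solution $z$. Then \eqref{eq:est_wellposed} yields
\[
\|z(T)-z_T\|_Z\le \|z_e(T)\|_Z+\|\hat z_T-z_T\|_Z\le M_T\bigl(1+c\|\chi\|_{C^2[0,T]}\bigr)\delta+\delta .
\]
Choosing $\delta$ small enough makes the right-hand side at most $\epsilon$, which proves \eqref{eq:ac_est}.

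The main obstacle is Step 2. It needs the trace bound $|u(L)|\le c\|z\|_Z$, which follows from the $H^2$ part of the $Z$-norm together with the $L^2$ term $\int_0^L|u|^2$ in the inner product. It also needs the linearity and superposition of weak solutions in Definition \ref{def:weaksol}, which is immediate from the form of \eqref{eq:weaksolution}.
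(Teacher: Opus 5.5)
Your proof is correct and follows essentially the same route as the paper. Both arguments approximate $z_0,z_T$ by elements of $\Phi$, transfer exactly via Theorem \ref{th:main_result}, perturb the input to restore compatibility, and conclude by linearity and \eqref{eq:est_wellposed}. The differences are minor: the paper uses the constant shift $f=\tilde f+u_0(L)-\tilde u_0(L)$, which is your construction with $\chi\equiv 1$, so the condition $\chi(T)=0$ is unnecessary. Your explicit trace constant in $|u^e(L)|\le c\|e_0\|_Z$ is more careful than the paper's implicit bound $|u_0(L)-\tilde u_0(L)|\le\epsilon_1$, which does not hold with constant $1$ in general.
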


\begin{pf}
Let $T>0$, $\epsilon>0$ and $z_0, z_T \in Z$ be given. Fix $\epsilon_1, \epsilon_2>0$ such that $2M_T \epsilon_1+\epsilon_2\leq \epsilon$. Here $M_T$ is the constant in \eqref{eq:est_wellposed}. Appealing to the density of $\Phi$ in $Z$ established in Proposition \ref{pr:Phidense}, choose states $\tilde{z}_0, \tilde{z}_T \in \Phi$ such that \vspace{-1mm}
\begin{equation}\label{eq:eps_ineq}
  \|z_0 - \tilde{z}_0\|_{Z} \leq \epsilon_1, \qquad \|z_T -  \tilde{z}_T\|_{Z} \leq \epsilon_2.
\end{equation}
Fix $\tilde{f} \in C^2[0,T]$ compatible for $\tilde z_0$ such that the solution $\tilde{z} \in C([0,T];Z)$ of the beam model \eqref{eq:beam1}-\eqref{eq:beam4} for the initial state $\tilde{z}_0$ and control input $\tilde{f}$ satisfies $\tilde{z}(T) = \tilde{z}_T$ (the existence of such an $\tilde f$ is guaranteed by Theorem \ref{th:main_result}). Suppose that $z_0 =[u_0 \ v_0 \ \alpha_0 \ \beta_0]$ and $\tilde{z}_0 =[\tilde{u}_0 \ \tilde{v}_0 \ \tilde{\alpha_0} \ \tilde{\beta_0}]$. Clearly $\tilde f(0) = \tilde u_0(L)$. Define $f \in C^2[0,T]$ as \vspace{-1mm}
\begin{equation}\label{eq:feps}
  f = \tilde{f} + u_0(L) - \tilde{u}_0(L). \vspace{-1mm}
\end{equation}
Then $f(0) = u_0(L)$, i.e. $f$ is compatible for $z_0$. Let $z\in C([0,T];Z)$ be the solution of \eqref{eq:beam1}-\eqref{eq:beam4} for the initial state ${z}_0$ and control input ${f}$. Clearly $f - \tilde{f}$ is compatible for $z_0-\tilde z_0$. It now follows from the linearity of the beam model that $z - \tilde{z}$ is the solution of \eqref{eq:beam1}-\eqref{eq:beam4} for the initial state ${z}_0 - \tilde{z}_0$ and the control input ${f} - \tilde{f}$, and therefore \vspace{-1mm}
$$\|z(T) - \tilde{z}(T) \|_Z \leq M_T \big(\|z_0 - \tilde{z}_0 \|_Z + \| {f} - \tilde{f}\|_{C^2[0,T]}\big), \vspace{-1mm}$$
see \eqref{eq:est_wellposed}. Using $\tilde{z}(T) = \tilde{z}_T$, \eqref{eq:eps_ineq}, \eqref{eq:feps} and the above estimate, it follows via the triangle inequality that
\begin{align*}
  \|z(T) - z_T \|_Z  &\leq \|z(T) - \tilde{z}_T \|_Z + \|z_T - \tilde{z}_T \|_Z  \\[1ex]
  & \leq M_T(\epsilon_1 + | u_0(L) - \tilde{u}_0(L)|) +\epsilon_2 \\[1ex]
  & \leq 2M_T\epsilon_1 +\epsilon_2\  \leq \ \epsilon,
\end{align*}
i.e. \eqref{eq:ac_est} holds. So we have shown that given any $T>0$, $\epsilon>0$ and $z_0, z_T \in Z$, there exists a $f\in C^2[0,T]$ compatible for $z_0$ such that the solution $z\in C([0,T];Z)$ of \eqref{eq:beam1}-\eqref{eq:beam4} for the initial state $z_0$ and input $f$ satisfies \eqref{eq:ac_est}. Hence the beam model \eqref{eq:beam1}-\eqref{eq:beam4} is approximately controllable over all time intervals. $\hfill \blacksquare$
\end{pf}

\section{Numerical and experimental results} \label{sec6} 

Figure 1 shows our experimental setup. It contains a non-uniform moving cantilever beam made of stainless steel. The width of the beam varies linearly along the length of the beam. One end of the beam supports a tip-mass, while the other end is attached via a cantilever joint to a cart mounted on a Hiwin single axis robot. The robot is driven by a Yasaka AC servo motor which fixes the cart position as per the position control input it receives from a NI MyRIO controller. The dynamics of the beam in the setup is governed by the coupled PDE-ODE model \eqref{eq:beam1}-\eqref{eq:beam4} with the following parameters: $L=0.5\m\textrm{m}$, $m=0.4\m\textrm{kg}$, $J=1.86 \times 10^{-4} \m\textrm{kg\m m}^2$ and $\rho(x)=0.11(1+3x)\m\textrm{kg/m}$ and $EI(x)=0.29(1+3x)\m\textrm{N/m}$ for $x\in [0,L]$. In Section \ref{sec6a}, using these parameters, we have numerically validated our solution to the motion planning problem by considering two examples. The initial state of the beam in the first example is a steady-state of the beam, while in the second example it is not a steady-state. The desired final state in both the examples is the zero state. In Section \ref{sec6b}, using our setup, we have experimentally validated the control input obtained in the first example.


\subsection{Numerical validation}\label{sec6a}

We consider two examples to illustrate our solution in Theorem \ref{th:main_result} to the motion planning problem. In both the examples, we consider the beam model \eqref{eq:beam1}-\eqref{eq:beam4} with parameters $L$, $m$, $J$, $\rho$ and $EI$ corresponding to the experimental setup in Figure 1.

\textbf{Example 1}. Consider the vectors $z_0 = [0.3 \ 0 \ 0 \ 0]$ and $z_T = [0 \ 0 \ 0 \ 0]$ in $Z$. These vectors are determined by \eqref{eq:z0exp} and \eqref{eq:zTexp} using $v_0 = \Wscr_1\Lscr_2 y_0 - \Wscr_2 \Lscr_1 y_0$ and $v_T = \Wscr_1\Lscr_2 y_T - \Wscr_2 \Lscr_1 y_T$, where $y_0(t)=0.3$ and $y_T(t)=0$ for all $t\geq0$. These vectors are steady-states of the beam model \eqref{eq:beam1}-\eqref{eq:beam4} and they belong to the set $\Phi$ introduced in Theorem \ref{th:main_result}, see Remark  \ref{rm:ss}. In this example, we address the motion planning problem of finding a control input $f$ which transfers the beam model from the initial state $z_0$ to the final state $z_T$ in 3 seconds.

We solve the above problem by following the procedure described in the proof of Theorem \ref{th:main_result}. Accordingly we choose $\psi$ to be the function in \eqref{eq:psi} with $s=1.5$ and $T=3$ and, using $y_0$ and $y_T$ given above, define $y\in G_{1.5}[0,3]$ via \eqref{eq:pdef}. The required control input is $f(t)=w(L,t)$ for $t\in[0,3]$, where $w=[\Wscr_1 \Lscr_2  - \Wscr_2 \Lscr_1]y$. Using the definitions of $\Lscr_1$, $\Lscr_2$, $\Wscr_1$, $\Wscr_2$ and changing the order of the double summations we obtain the following series representation for the control input $f$: \vspace{-2mm}
$$f(t) = \sum_{l=0}^{\infty}\sum_{j+k=l} \big[g_k(L)h_{j,x}(L) - h_k(L)g_{j,x}(L) \big]y^{(2l)}(t).  \vspace{-2mm}$$
(Note that $g_k$ and $h_k$ are computed using the expressions in the proof of Proposition \ref{pr:genfnest}.) Consider the truncation  \vspace{-2mm}
$$f^N(t) = \sum_{l=0}^{N}\sum_{j+k=l} \big[g_k(L)h_{j,x}(L) -h_k(L)g_{j,x}(L) \big]y^{(2l)}(t) \vspace{-2mm}$$
of the series for $f$. The truncated series $f^N(t)$ converges rapidly to $f(t)$ as $N$ tends to infinity. In this example, $f^{10}$ is a good approximation for the control input $f$ which solves the motion planning problem, see Figure 2. \newpage

\m\vspace{-15mm}
$$\includegraphics{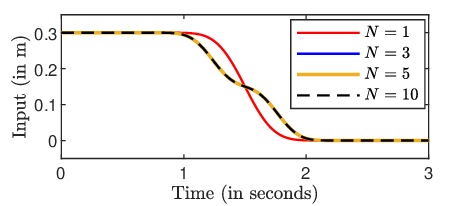}\vspace{-1mm}$$
\centerline{ \parbox{3.2in}{\small
Figure 2. Plot of $f^N$ for $N \in\{1,3,5,10\}$ in Example 1. Here $\|f^5 - f^3\|_{L^2(0,3)} < 9 \times 10^{-7} $ and $\|f^{10} - f^5\|_{L^2(0,3)} < 3 \times 10^{-13} $, which indicate that $f^{10}$ is a good approximation of the control input $f$ which solves the motion planning problem. \vspace{-3mm}}}

$$\includegraphics{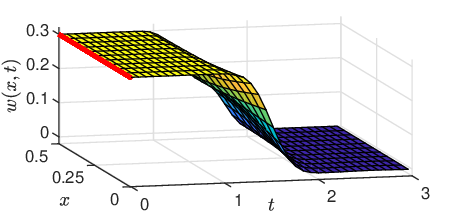}\vspace{-1mm}$$
\centerline{ \parbox{3.2in}{\small
Figure 3. Displacement profile $w(x,t)$ of the beam model \eqref{eq:beam1}-\eqref{eq:beam4} corresponding to the initial state $z_0$ and control input $f^{10}$ in Example 1. The final displacement at $t = 3$ seconds is close to zero ($\|w(\cdot,3)\|_{H^2(0,0.5)} < 4.1 \times 10^{-3}$). \vspace{-3mm}}}

$$\includegraphics{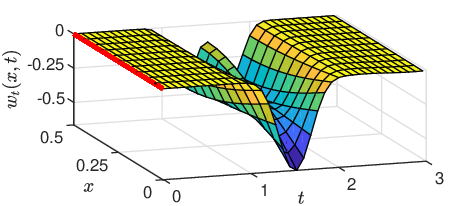}\vspace{-1mm}$$
\centerline{ \parbox{3.2in}{\small
Figure 4. Velocity profile $w_t(x,t)$ of the beam model \eqref{eq:beam1}-\eqref{eq:beam4} corresponding to the initial state $z_0$ and control input $f^{10}$ in Example 1. The final velocity at $t = 3$ seconds is close to zero ($\|w_t(\cdot,3)\|_{L^2(0,0.5)} < 8.1 \times 10^{-4}$).}}

We discretize the spatial derivatives in the beam model \eqref{eq:beam1}-\eqref{eq:beam4} using the finite-difference method (with step-size 1/600) to obtain a set of ODEs in time which serve as a numerical model for the beam model. We validate our solution to the motion planning problem presented above by simulating the numerical model with the initial state $z_0$ and the control input $f^{10}$ shown in Figure 2. Figure 3 and Figure 4 show the beam displacement profile $w(x,t)$ and the beam velocity profile $w_t(x,t)$ obtained from our simulation. Recall that $z_T=[0 \ 0\ 0\ 0]$. As expected from Theorem \ref{th:main_result}, the displacement profile and the velocity profile converge to the zero function at $t=3$ seconds.

\textbf{Example 2}. Let $y_0(t)=1 + 10t^2e^{-2t}$ and $y_T(t)=0$ for all $t\in[0,3]$. Note that $y_0$ is a real analytic function and so $y_0\in G_{1.5}[0,3]$. Let the vectors $z_0, z_T\in Z$ be determined by \eqref{eq:z0exp} and \eqref{eq:zTexp} using $v_0 = \Wscr_1\Lscr_2 y_0 - \Wscr_2 \Lscr_1 y_0$ and $v_T = \Wscr_1\Lscr_2 y_T - \Wscr_2 \Lscr_1 y_T$. Clearly $z_T=[0 \ 0 \ 0 \ 0]$ and $z_0, z_T$ belong to the set $\Phi$ introduced in Theorem \ref{th:main_result}. Furthermore, $v_0(\cdot,0)$ (red line in Figure 6) is not a constant function and $v_{0,t}(\cdot,0)$ (red line in Figure 7) is not the zero function. So $z_0$ is not a steady-state of the beam equation \eqref{eq:beam1}-\eqref{eq:beam4}, see Remark \ref{rm:ss}. In this example, we address the motion planning problem of finding a control input $f$ which transfers the beam model from the initial state $z_0$ to the final state $z_T$ in 3 \vspace{-7.5mm} seconds.

$$\includegraphics{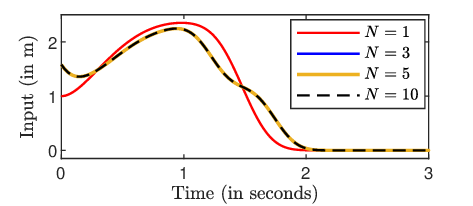}\vspace{-1.5mm}$$
\centerline{ \parbox{3.2in}{\small
Figure 5. Plot of $f^N$ for $N \in\{1,3,5,10\}$ in Example 2. Here $\|f^5 - f^3\|_{L^2(0,3)} < 7 \times 10^{-6} $ and $\|f^{10} - f^5\|_{L^2(0,3)} < 3 \times 10^{-12} $, which indicate that $f^{10}$ is a good approximation of the control input $f$ which solves the motion planning problem. \vspace{-1mm} }}

$$\includegraphics{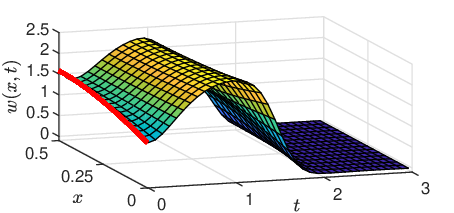}\vspace{-1mm}$$
\centerline{ \parbox{3.2in}{\small
Figure 6. Displacement profile $w(x,t)$ of the beam model \eqref{eq:beam1}-\eqref{eq:beam4} corresponding to the initial state $z_0$ and control input $f^{10}$ in Example 2. The final displacement at $t = 3$ seconds is close to zero ($\|w(\cdot,3)\|_{H^2(0,0.5)} < 2.3 \times 10^{-2}$). \vspace{-5mm}}}

$$\includegraphics{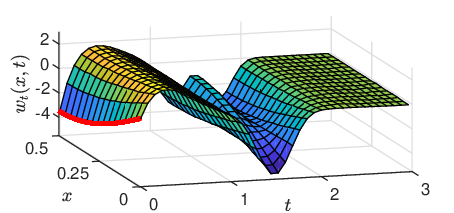}\vspace{-1mm}$$
\centerline{ \parbox{3.2in}{\small
Figure 7. Velocity profile $w_t(x,t)$ of the beam model \eqref{eq:beam1}-\eqref{eq:beam4} corresponding to the initial state $z_0$ and control input $f^{10}$ in Example 2. The final velocity at $t = 3$ seconds is close to zero ($\|w_t(\cdot,3)\|_{L^2(0,0.5)} < 1.6 \times 10^{-2}$). \vspace{-2mm}}}

Like in Example 1, we solve the above problem by following the procedure described in the proof of Theorem \ref{th:main_result}. Accordingly we choose $\psi$ to be the function in \eqref{eq:psi} with $s=1.5$ and $T=3$ and, using $y_0$ and $y_T$ given above, define $y\in G_{1.5}[0,3]$ via \eqref{eq:pdef}. The required control input is $f(t) = w(L,t)$ for $t\in[0,3]$, where $w=[\Wscr_1 \Lscr_2  - \Wscr_2 \Lscr_1]y$. Truncating the series for $f$ like in Example 1 we obtain the finite series $f^N$, which converges rapidly to $f$ as $N$ tends to infinity. In this example, $f^{10}$ is a good approximation for the control input $f$ which solves the motion planning problem, see Figure 5. \vspace{-1mm}

We validate our above solution to the motion planning problem by simulating the finite-difference numerical model for the beam model, described in Example 1, with the non-steady initial state $z_0$ and the control input $f^{10}$ shown in Figure 5. Figure 6 and Figure 7 show the beam displacement profile $w(x,t)$ and the beam velocity profile $w_t(x,t)$ obtained from our simulation. As expected from Theorem \ref{th:main_result}, the displacement profile and the velocity profile converge to the zero function at $t=3$ seconds. \vspace{-9mm}

\subsection{Experimental validation}\label{sec6b} \vspace{-2mm}

In Example 1, we have constructed a smooth control input $f^{10}$ (see Figure 2) for transferring the beam model from $z_0=[0.3\ 0\ 0 \ 0]$ to $z_T=[0 \ 0 \ 0 \ 0]$ in 3 seconds. Note that the control input $f^{10}(t)$ specifies the displacement of the cart (cantilever joint) at the time instant $t$. We implement this control input $f^{10}$ on our experimental setup as follows. We discretize the time interval $[0,3]$ seconds into subintervals of 20\m{ms} each. For each subinterval $[a,b]$, we compute $f^{10}(b)-f^{10}(a)$, which is the required displacement of the cart over the time interval $[a,b]$. We then convert this displacement into the number of pulses, which when provided to the AC servo drive over the time interval $[a,b]$, ensures that the single axis robot moves the cart by $f^{10}(b)-f^{10}(a)$. We then run a 20\m{ms} loop in the FPGA module in the NI MyRIO controller for 3 seconds, which provides the computed number of pulses over each 20\m{ms} subinterval to the servo drive, thereby ensuring that the cart position tracks the control input $f^{10}$ accurately. The position of the tip-mass during this experiment is measured using a Micro-Epsilon laser displacement sensor, which has an accuracy of 0.1\m{mm} and a sampling rate of 2000\m{Hz}. Figure 8 shows the position of the tip-mass obtained from our experiment and the numerical simulation in Example 1. As seen from the plots, they match closely. The video of our experiment showing that the beam is transferred from one rest position to another at a distance of 0.3\m{m} as desired is available here: {https://youtu.be/-iXZiz25g5Q}. In other words, our solution $f^{10}$ to the motion planning problem in Example 1 performs as expected while implemented on the experimental setup.

$$\includegraphics{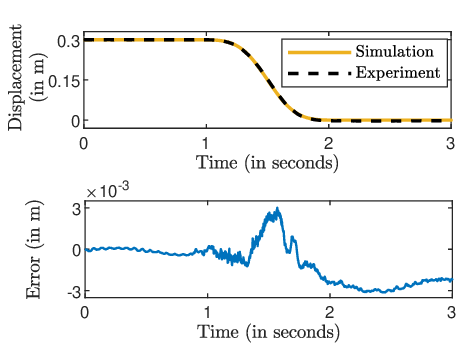}\vspace{-3.5mm}$$
\centerline{ \parbox{3.2in}{\small
Figure 8. The first plot shows the tip-mass position obtained from our experiment and the numerical simulation in Example 1, while the second plot shows the error between these two positions. The maximum error is about 3\m{mm} (i.e. $1$\% of the net dispacement of the tip-mass). \vspace{1mm} }}

\appendix

{\bf Appendix A \vspace{0mm}}
\renewcommand{\theequation}{{A.\arabic{equation}}}

{\bf Proof of the claim above \eqref{eq:weaksol_wbeam}.} \vspace{-2mm}

The coupled PDE-ODE system \eqref{eq:wbeam1}-\eqref{eq:wbeam4} can be written as an abstract evolution equation on $\tilde Z$ as follows: \vspace{-2mm}
\begin{equation}\label{eq:evol}
 \dot{\tilde{z}} (t) = \Ascr \tilde z(t) + \Bscr\bigl[ f(t) \ \ \ddot f(t)\bigr] \qquad \forall \, t>0. \vspace{-2mm}
\end{equation}
Here the state operator $\Ascr:\tilde Z\to \tilde Z$ has domain $\Dscr(\Ascr) = \{[u \ v \ \alpha \ \beta] \in H^4(0,L)\times H^2(0,L) \times \rline \times \rline \m\big|\m  u(L)=u_x(L)=v(L)=v_x(L)=0,\, v(0) = \alpha, \, v_x(0)=\beta\}$ and
$ \Ascr \bigl[u \ v \ \alpha  \ \beta\bigr] = [v \ -\!\frac{(EI u_{xx})_{xx}}{\rho} \ -\!\frac{(EI u_{xx})_x(0)}{m} \  \frac{EI(0)u_{xx}(0)}{J} ]$ for all $[u \ v \ \alpha \ \beta]\in \Dscr(\Ascr)$ and the control operator $\Bscr:\rline^2\to\tilde Z$ is a bounded linear operator such that $\Bscr [ g_1 \ g_2] = [0 \ \frac{(EI  \nu_{xx})_{xx}g_1+\nu g_2}{ \rho} \ \ 0 \ \ 0]$ for all $[g_1 \ g_2]\in \rline^2$. The operator $\Ascr$ generates a $C_0$-semigroup $\tline$ on $\tilde Z$,\vspace{-1mm} see \cite[Section 5]{ZhWe:2011}.

For the initial state $\tilde z_0=[\tilde u_0 \ \tilde v_0 \ \tilde \alpha_0 \ \tilde \beta_0]\in \tilde Z$ and control input $f\in C^2[0,T]$, the mild solution of the abstract system \eqref{eq:evol} is the function $\tilde z \in C([0,T];\tilde Z)$ given by \vspace{-2.5mm}
\begin{equation} \label{eq:tildems}
 \tilde z(t) = \tline_t \tilde z_0 + \int_0^t \tline_{t-\tau}\Bscr \bigl[f(\tau) \ \ \ddot f(\tau)\bigr]\dd \tau. \vspace{-2.5mm}
\end{equation}
From \cite[Remark 4.2.6]{TuWe:2009} it follows that this mild solution  is the unique function in $C([0,T];\tilde Z)$ which satisfies \vspace{-2.5mm}
\begin{align}
 \langle \tilde z(t) -\tilde z_0, p\rangle_{\tilde Z} =& \int_0^t \Big[\langle \tilde z(\tau),-\Ascr p\rangle_{\tilde Z}\nonumber\\
 &\hspace{6mm}+ \big\langle \Bscr \bigl[f(\tau) \ \  \ddot f(\tau)\bigl],p\big\rangle_{\tilde Z}\Big]\dd \tau \label{eq:abs_weakform} \\[-4.5ex]\nonumber
\end{align}
for all $p\in\Dscr(\Ascr)$ and each $t\in [0,T]$. Clearly $\tilde z(t) = [\tilde w(\cdot,t) \ \tilde v(\cdot,t) \ \tilde a(t) \ \tilde b(t)]$, where $\tilde w\in C([0,T];H^2(0,L))$ with $\tilde w(L,t)=\tilde w_x(L,t)=0$, $\tilde v\in C([0,T];L^2(0,L))$ and $\tilde a,\tilde b\in C[0,T]$. For each $t\in[0,T]$, define $\eta_1(t)\in L^2(0,L)$,  $\eta_2(t)\in \rline$ and $\eta_3(t)\in \rline$ as follows: \vspace{-2mm}
\begin{align*}
 &\eta_1(t) = \tilde w(t) - \tilde u_0 - \int_0^t \tilde v(\tau)\dd \tau, \\
 &\eta_2(t) = \tilde w(0,t)- \tilde u_0(0)- \int_0^t \tilde a(\tau) \dd\tau,\\
 &\eta_3(t) = \tilde w_x(0,t)- \tilde u_{0,x}(0)- \int_0^t \tilde b(\tau) \dd\tau. \\[-4.5ex]
\end{align*}
Fix $s\in [0,T]$. Since $\Ascr$ is boundedly invertible, see \cite[Lemma 2.1]{Guo:2002}, there exists a $\tilde p\in \Dscr(\Ascr)$ such that $\Ascr \tilde p = [0 \ \eta_1(s) \ \eta_2(s) \ \eta_3(s)]$. From the definition of $\Ascr$ it follows that $\tilde p$ is of the form $\tilde p=[\psi \ 0 \ 0 \ 0]$ for some $\psi\in H^4(0,L)$ with $\psi(L)=\psi_x(L)=0$. Taking  $p=\tilde p$ in \eqref{eq:abs_weakform}, it follows via a simple calculation using integration by parts that $\eta_1(s)=0$, $\eta_2(s)=0$ and $\eta_3(s)=0$. This holds for all $s\in[0,T]$. So we can conclude that  $\tilde w \in C([0,T];H^2(0,L))\cap C^1([0,T];L^2(0,L))$ with $\tilde w(0,\cdot), \tilde w_x(0,\cdot)\in C^1[0,T]$ and $\tilde w(L,t)=\tilde w_x(L,t)=0$, $[\tilde w(\cdot,0) \ \tilde w_t(\cdot,0) \ \tilde w_t(0,0) \ \tilde w_{xt}(0,0)]=[\tilde u_0 \ \tilde v_0 \ \tilde \alpha_0 \ \tilde \beta_0]$ and \vspace{-1mm}
\begin{equation}\label{eq:tildez}
 \tilde z(t) = [\tilde w(\cdot,t) \ \ \tilde w_t(\cdot,t) \ \ \tilde w_t(0,t) \ \ \tilde w_{xt}(0,t)] \vspace{-1mm}
\end{equation}
for each $t\in[0,T]$.

Any $p\in \Dscr(\Ascr)$ can be expressed  as $[\psi \ \varphi \ \varphi(0) \ \varphi_x(0)]$, where $\psi \in H^4(0,L)$ with $\psi(L)=\psi_x(L)=0$ and $\varphi\in H^2(0,L)$ with $\varphi(L)=\varphi_x(L)=0$. Substituting this expression for $p$ and the expression for $\tilde z$ from \eqref{eq:tildez} into \eqref{eq:abs_weakform} and then collecting the terms which contain only $\varphi$ and $\psi$ separately, it follows that \eqref{eq:abs_weakform} is equivalent to the following pair of equations: \eqref{eq:weaksol_wbeam} and \vspace{-2mm}
\begin{align}
 &\int_0^L EI(x)\bigl[\tilde w_{xx}(x,t) -  \tilde u_{0xx}(x) \bigr]\psi_{xx}(x)\dd x \nonumber\\
 &= \int_0^t\int_{0}^{L}(EI\psi_{xx})_{xx}(x)\tilde w_t(x,\tau)\dd x \dd \tau\nonumber\\
 &\hspace{10mm} - (EI\psi_{xx})_x(0) \int_0^t   \tilde w_t(0,\tau)\dd \tau\nonumber\\
 &\hspace{10mm} + EI(0)\psi_{xx}(0)\int_0^t \tilde w_{xt}(0,\tau)\dd \tau, \label{eq:weaksol_wbeam1} \\[-4.5ex]\nonumber
\end{align}
From the properties of $[\tilde w(\cdot,t) \ \tilde w_t(\cdot,t) \ \tilde w_t(0,t) \ \tilde w_{xt}(0,t)]$ and $\psi$ it is easy to verify via integration by parts that \eqref{eq:weaksol_wbeam1} holds trivially for all $\psi$. In other words, \eqref{eq:abs_weakform} is equivalent to \eqref{eq:weaksol_wbeam}. Since $\tilde z$ in \eqref{eq:tildez} is the unique function in $C([0,T];\tilde Z)$ which satisfies \eqref{eq:abs_weakform} for all $p\in\Dscr(\Ascr)$ and $t\in[0,T]$, it follows that it is also the unique function in $C([0,T];\tilde Z)$ which satisfies \eqref{eq:weaksol_wbeam} for all $\varphi\in H^2(0,L)$ with $\varphi(L)=\varphi_x(L)=0$ and $t\in[0,T]$. This completes the proof of our claim.

Finally, since $\Bscr$ is a bounded linear operator, it follows from \eqref{eq:tildems} and the properties of $C_0$-semigroups that there exists a constant $\tilde M_T>0$ independent of $\tilde z_0$ and $f$ such that \eqref{eq:ztd_wellposed_est} holds.


\end{document}